
\documentclass[journal,12pt,onecolumn]{IEEEtran}
% \documentclass[journal,10pt,doublecolumn]{IEEEtran}
%
% If IEEEtran.cls has not been installed into the LaTeX system files,
% manually specify the path to it like:
% \documentclass[journal]{../sty/IEEEtran}

\usepackage[letterpaper, top=1in, bottom=1in, left=1in, right=1in]{geometry}

\usepackage{setspace}
% \doublespacing

\onehalfspacing

\usepackage{footnote}
\usepackage[table]{xcolor}
\usepackage{times}
\usepackage{epsfig}
\usepackage{amsmath}
\usepackage{amsthm}
\usepackage{amsfonts}
\usepackage{graphicx}
\usepackage{amssymb}
\usepackage{amstext}
\usepackage{latexsym}
\usepackage{color,colortbl}
\usepackage{ifthen}
\usepackage{multirow}
\usepackage{verbatim}
\usepackage{array,tabularx}
\usepackage{arydshln}
\usepackage[mathscr]{euscript}
\usepackage{accents}
 \usepackage{cite}
\usepackage{hhline}
\usepackage{caption}
\usepackage{subcaption}
\usepackage{enumerate}
\usepackage{xcolor}
\usepackage{mathtools}
\usepackage{url}
\usepackage{xparse}
\usepackage{makecell}
\usepackage{varwidth}
\usepackage{bm}
\usepackage{arydshln}
\captionsetup[table]{position=bottom}

\usepackage{comment}
\usepackage{wrapfig}

\usepackage{caption}
\usepackage{float}
\usepackage{booktabs}

\usepackage{mathtools}

\usepackage{booktabs}

\usepackage{float}

\usepackage{multirow}

\usepackage{algorithm,algpseudocode}

%table
\newcolumntype{C}[1]{>{\centering\let\newline\\\arraybackslash\hspace{0pt}}m{#1}}

\usepackage[normalem]{ulem}

\usepackage{amsmath,pgfplots,amssymb}
\usepackage{subcaption,graphicx}

\newtheorem{theorem}{Theorem}

\newtheorem{lemma}{Lemma}

\theoremstyle{definition}

\newtheorem{definition}{Definition}

\theoremstyle{definition}
\newtheorem{remark}{Remark}

\theoremstyle{definition}

\DeclareMathOperator{\lcm}{lcm}

\usetikzlibrary{matrix,decorations.pathreplacing}

\newcommand{\red}[1] {\textcolor{red}{ #1}}

%Colours

\definecolor{DarkGreen}{rgb}{0.1,0.5,0.1}
\definecolor{DarkRed}{rgb}{0.5,0.1,0.1}
\definecolor{DarkBlue}{rgb}{0.1,0.1,0.5}
\definecolor{DarkPurple}{rgb}{0.5,0.2,0.5}
\definecolor{DarkTurquoise}{rgb}{0.1,0.5,0.5}

%Comments

\newcommand{\raf}[1]{\textcolor{DarkGreen}{ [ #1 --rafael ] \normalsize }}

\newcommand{\ale}[1]{\textcolor{DarkRed}{ [ #1 --alejandro ] \normalsize }}

\newcommand{\off}[1]{}

\algnewcommand\algorithmicforeach{\textbf{for each}}
\algdef{S}[FOR]{ForEach}[1]{\algorithmicforeach\ #1\ \algorithmicdo}

% symbols
\newcommand{\kb}{k_b}			 		%
\newcommand{\ku}{k_u}			 		%
\newcommand{\ks}{k_s}			 		%
\newcommand{\nb}{n_b}			 		%

\ifCLASSINFOpdf
  % \usepackage[pdftex]{graphicx}
  % declare the path(s) where your graphic files are
  % \graphicspath{{../pdf/}{../jpeg/}}
  % and their extensions so you won't have to specify these with
  % every instance of \includegraphics
  % \DeclareGraphicsExtensions{.pdf,.jpeg,.png}
\else
  % or other class option (dvipsone, dvipdf, if not using dvips). graphicx
  % will default to the driver specified in the system graphics.cfg if no
  % driver is specified.
  % \usepackage[dvips]{graphicx}
  % declare the path(s) where your graphic files are
  % \graphicspath{{../eps/}}
  % and their extensions so you won't have to specify these with
  % every instance of \includegraphics
  % \DeclareGraphicsExtensions{.eps}
\fi
\hyphenation{op-tical net-works semi-conduc-tor}

\begin{document}
%
% paper title
% Titles are generally capitalized except for words such as a, an, and, as,
% at, but, by, for, in, nor, of, on, or, the, to and up, which are usually
% not capitalized unless they are the first or last word of the title.
% Linebreaks \\ can be used within to get better formatting as desired.
% Do not put math or special symbols in the title.
%\title{Lifting Private Information Retrieval}
\title{Network Coding-Based\\ Post-Quantum Cryptography}
%
%
% author names and IEEE memberships
% note positions of commas and nonbreaking spaces ( ~ ) LaTeX will not break
% a structure at a ~ so this keeps an author's name from being broken across
% two lines.
% use \thanks{} to gain access to the first footnote area
% a separate \thanks must be used for each paragraph as LaTeX2e's \thanks
% was not built to handle multiple paragraphs
%

\author{\hspace{-0.2cm}Alejandro Cohen, Rafael G. L. D’Oliveira, Salman Salamatian, and Muriel M\'{e}dard \\ Research Laboratory of Electronics, MIT, Cambridge, MA, USA,\\ Emails: \{cohenale, rafaeld, salmansa, medard\}@mit.edu\thanks{Patent application submitted: no. 63/072,430.}}

\maketitle

% As a general rule, do not put math, special symbols or citations
% in the abstract or keywords.

%%%%%%%%%%%%%%%%%%%%%%%%%%%%%%%%%%%%%%%%%%%%
\begin{abstract}
%%%%%%%%%%%%%%%%%%%%%%%%%%%%%%%%%%%%%%%%%%%%
We propose a novel hybrid universal network-coding cryptosystem (HUNCC) to obtain secure post-quantum cryptography at high communication rates. The secure network-coding scheme we offer is hybrid in the sense that it combines information-theory security with public-key cryptography. In addition, the scheme is general and can be applied to any communication network, and to any public-key cryptosystem.
Our hybrid scheme is based on the information theoretic notion of individual secrecy, which traditionally relies on the assumption that an eavesdropper can only observe a subset of the communication links between the trusted parties -- an assumption that is often challenging to enforce.
For this setting, several code constructions have been developed, where the messages are linearly mixed before transmission over each of the paths in a way that guarantees that an adversary which observes only a subset has sufficient uncertainty about each individual message.

Instead, in this paper, we take a computational viewpoint, and construct a coding scheme in which an arbitrary secure cryptosystem is utilized on a subset of the links, while a pre-processing similar to the one in individual security is utilized.
Under this scheme, we demonstrate 1) a computational security guarantee for an adversary which observes the entirety of the links 2) an information theoretic security guarantee for an adversary which observes a subset of the links, and 3) information rates which approach the capacity of the network and greatly improve upon the current solutions.

A perhaps surprising consequence of our scheme is that, to guarantee a computational security level $b$, it is sufficient to encrypt a single link using a computational post-quantum scheme.
That is, using HUNCC, we can ensure post-quantum security in networks where it is not possible to use public-key encryption over all the links in the network. In addition, the information rate approaches 1 as the number of communication links increases.
As a concrete example, in a multipath network with three links, using a 128-bit computationally secure McEliece cryptosystem only over one link, we obtain a 128-bit computational security level over all paths with a total information rate of 0.91 in the network.

%%%%%%%%%%%%%%%%%%%%%%%%%%%%%%%%%%%%%%%%%%%%
\end{abstract}
%%%%%%%%%%%%%%%%%%%%%%%%%%%%%%%%%%%%%%%%%%%%

% For peer review papers, you can put extra information on the cover
% page as needed:
% \ifCLASSOPTIONpeerreview
% \begin{center} \bfseries EDICS Category: 3-BBND \end{center}
% \fi
%
% For peerreview papers, this IEEEtran command inserts a page break and
% creates the second title. It will be ignored for other modes.
\IEEEpeerreviewmaketitle

%%%%%%%%%%%%%%%%%%%%%%%%%%%%%%%%%%%%%%%%%%%%
  \section{Introduction}
%%%%%%%%%%%%%%%%%%%%%%%%%%%%%%%%%%%%%%%%%%%%

\begin{figure}[t]
    \centering
    \includegraphics[width = 0.5\columnwidth]{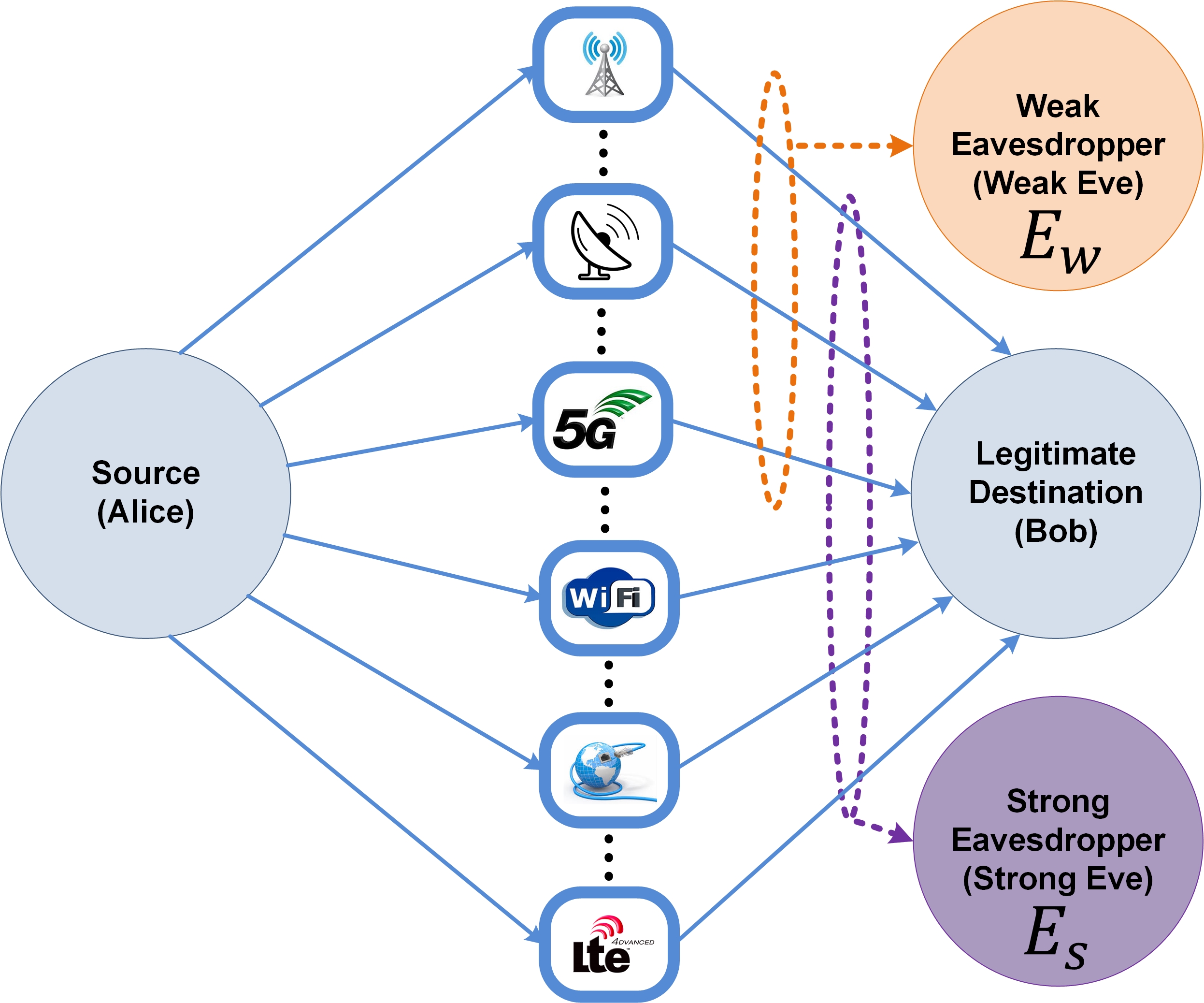}
    \caption{Post-quantum secure multipath network with $l$ paths, one source, Alice, one legitimate destination, Bob, and two types of possible eavesdropper's, Eve, weak and strong, which can obtain the information transmitted over $w<l$ or all the $l$ paths, respectively.}
    \label{fig:system_model}
\end{figure}

The connection between Information Theory and Cryptography goes back to Shannon himself -- it is said that his work in security inspired his seminal work on communications. There, he defines the information-theoretic notion of \emph{perfect secrecy} when studying the setting where two users, Alice and Bob wish to communicate privately in the presence of an eavesdropper, Eve. Under perfect secrecy, $H(M|X)=H(M)$, where $H$ is the entropy, $M$ is the private message, and $X$ is the encrypted message sent through the communication network. If Eve can observe the encrypted message $X$ completely, perfect privacy can only be obtained if both Alice and Bob share a random key, $R$, with entropy as large as the message, i.e. $H(R)\geq H(M)$ \cite{shannon1949communication}.

This necessity in Alice and Bob sharing large secret keys is often non-practical, e.g. if Alice and Bob are geographically distant. Hence, much effort has been devoted to developing alternative solutions which relax upon the perfect secrecy condition. One such relaxation comes from assuming that the eavesdropper has limited computational power. Such schemes, referred to as \emph{computationally secure} in this paper, rely on the conjecture that certain one-way functions are hard to invert \cite{kaltz2008introduction}. Thus, Alice encrypts the private message using a one-way function before sending it to Bob. This function should be hard for Eve to invert, but inversion for Bob should be possible if he possesses the right key. One way to achieve this is via \emph{public-key cryptography}.

A public-key cryptosystem consists of an encryption function $\mathrm{Enc}(\cdot)$, a decryption function $\mathrm{Dec}(\cdot)$, a secret key $s$, and a public key $p$. The encryption function uses the public key to encrypt the private message $M$ into $\mathrm{Enc}(M,p)$. The decryption function uses the secret key to decrypt the encrypted message $M = \mathrm{Dec}[\mathrm{Enc}(M,p),s]$. The critical property is that decrypting the encrypted message without the secret key is computationally expensive. Although there are many ways of characterizing this, we will focus on the notion of \emph{security level}. Informally, a public-key cryptosystem has security level $b$, referred to as being $b$-bit secure, if the amount of operations expected to decode the encrypted message, without knowledge of the secret key, is~$2^b$.

One of the first, and most widely used, public-key cryptosystems is the Rivest–Shamir–Adleman (RSA) cryptosystem \cite{rivest1978method}. The security of RSA relies on the hardness conjecture of two mathematical problems: integer factorization and the RSA problem. In 1994 however, Peter Shor presented a polynomial-time algorithm for integer factorization, known as Shor's algorithm \cite{365700} -- with the caveat that the algorithm runs on a quantum computer. In other words, if sufficiently large quantum computers are ever to be built, Shor's algorithm can be used to break the RSA cryptosystem \cite{hallgren2005fast,hallgren2007polynomial,schmidt2005polynomial,ding2005cryptanalysis,shor1999polynomial}. This subsequently led to an increased interest in cryptosystems which are resilient to quantum attacks, a field known as \emph{post-quantum cryptography} \cite{bernstein2008attacking,bernstein2009introduction}.

A candidate for post-quantum cryptography, known as the \emph{McEliece cryptosystem}, was introduced in \cite{mceliece1978public}. The McEliece cryptosystem is immune to attacks that use Shor's algorithm. Interestingly, the McEliece is also based on a connection between cryptography and communication theory --
its security relies on the fact that decoding a general linear code is NP-hard \cite{1055873}. The original scheme uses binary Goppa codes \cite{berlekamp1973goppa,1055350}. Apart from having no known quantum attacks, the encryption and decryption algorithms are faster than those of RSA. Two main disadvantages of McEliece are: i) in usual applications, the size of the public key is much larger than that of RSA and ii) it suffers from a large communication overhead, with a communication rate around $0.5$ in the original paper.

One may be tempted to increase the communication rate by changing the parameters of the Goppa code.
However, a key result pertaining to the communication rate was presented in \cite{6089437}, where a polynomial time algorithm was given for distinguishing the matrix of a high rate Goppa code with a random matrix. Therefore, the security of high rates Goppa code may not be guaranteed. Another idea is to look at other families of codes, away from Goppa codes. As shown in Table \ref{tab:code}, most of them have been broken.

In parallel with the advances in computational security, another relaxation on perfect privacy was considered in the literature, mostly by information theorists \cite{wyner1975wire,ozarow1985wire,el2012secure,el2007wiretap,liang2009information,bloch2011physical,zhou2013physical}. Instead of restricting the computational power of the eavesdropper, in \emph{physical layer security} one limits how much information Eve can obtain about the encrypted message. In this setting, information-theoretic security can be obtained at the expense of the communication rate, as shown in the seminal work of Wyner\cite{wyner1975wire}, where he introduced the wiretap channel -- the analog of the classical Alice, Bob and Eve triple under the physical layer security assumption. For example, in \cite{6770743} it is assumed that the eavesdropper can observe any set $w$ out of a total of $n$ transmitted symbols. Denoting this set by $Y_{E_w}$, it was shown that there exist encryption codes with communication rate $\frac{n-w}{n}$ which does not leak any information about the message to the eavesdropper, i.e. $H(M|Y_{E_w}) = H(M)$. The price to pay in rate to achieve perfect secrecy in physical layer security is significant -- decreasing the rate of the legitimate communication is necessary.

In an effort to increase the efficiency in terms of rate, yet another relaxation of the perfect secrecy of Shannon was introduced, namely \emph{individual secrecy} \cite{kobayashi2013secure,bhattad2005weakly,silva2009universal,silva2011universal,mansour2014secrecy,chen2015individual,mansour2015individual,mansourindividual, goldenbaum2015multiple,chensecure, mansour2015individual1,cohen2018secure}.
This is best explained in a network setup, where Alice has many messages to send to Bob, say $M_1,\ldots, M_m$, and Eve may observe any $w$ of them. By increasing the rate beyond the limits given by the wiretap channel of Wyner, it is inevitable that information will be leaked.
Yet, it is unclear whether the eavesdropper Eve is able to utilize this information to recover the sent messages.

The notion of individual secrecy crystallizes this concept by guaranteeing that $H(M_i|Y_{E,w}) = H(M_i)$, for all $i = 1, \ldots, m$.
Note that this is generally weaker than perfect secrecy, where $H(M_1,\ldots, M_m |Y_{E,w}) = H(M_1,\ldots, M_m)$.
In other words, the information that Eve obtains from her observation does not help her decipher each individual message -- rather it is information about the combination of the messages.
Individual secrecy is definitely a weaker notion of security, yet it often allows to increase the rate drastically, sometimes even making the encryption process free in terms of rate.
While there is no doubt that efficient rates are beneficial, the assumptions of physical layer security, namely that Eve does not experience a worse channel than Bob, are hard to enforce in practice. Figure \ref{fig:foobar} depicts a simple example of secure post-quantum candidates for multipath networks based on the following security solutions considered in the literature we elaborated above, (a) One-time pad \cite{shannon1949communication} (b) McEliece cryptosystem \cite{mceliece1978public} (c) Network coding wiretap II \cite{el2007wiretap} (d) Individual Security for networks \cite{cohen2018secure}.

\begin{figure}[t]
\begin{flushleft}
\centering
\begin{subfigure}[b]{0.32\textwidth}
\includegraphics[scale=0.4]{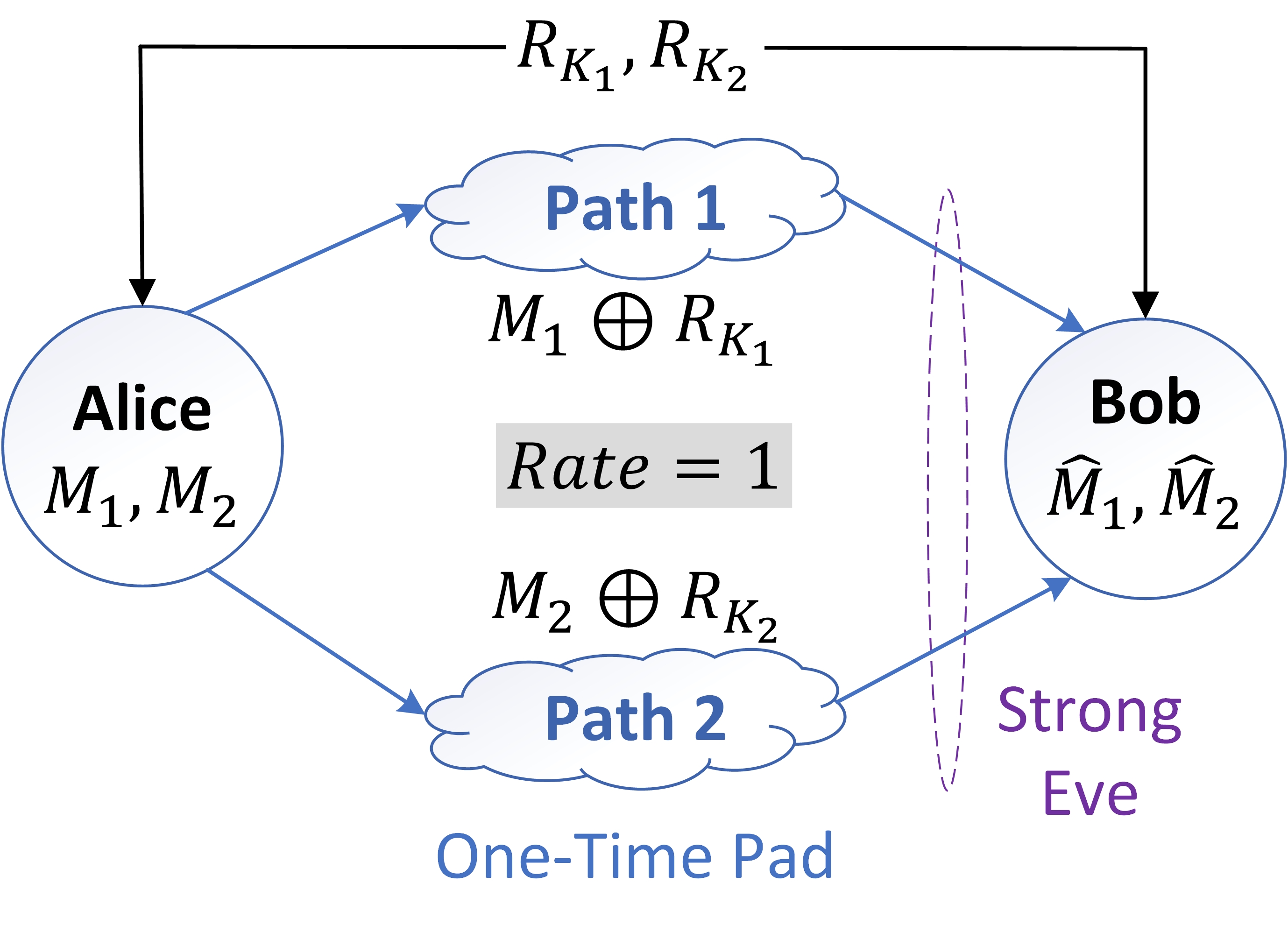}
\caption{}
\end{subfigure}
\begin{subfigure}[b]{0.32\textwidth}
\includegraphics[scale=0.4]{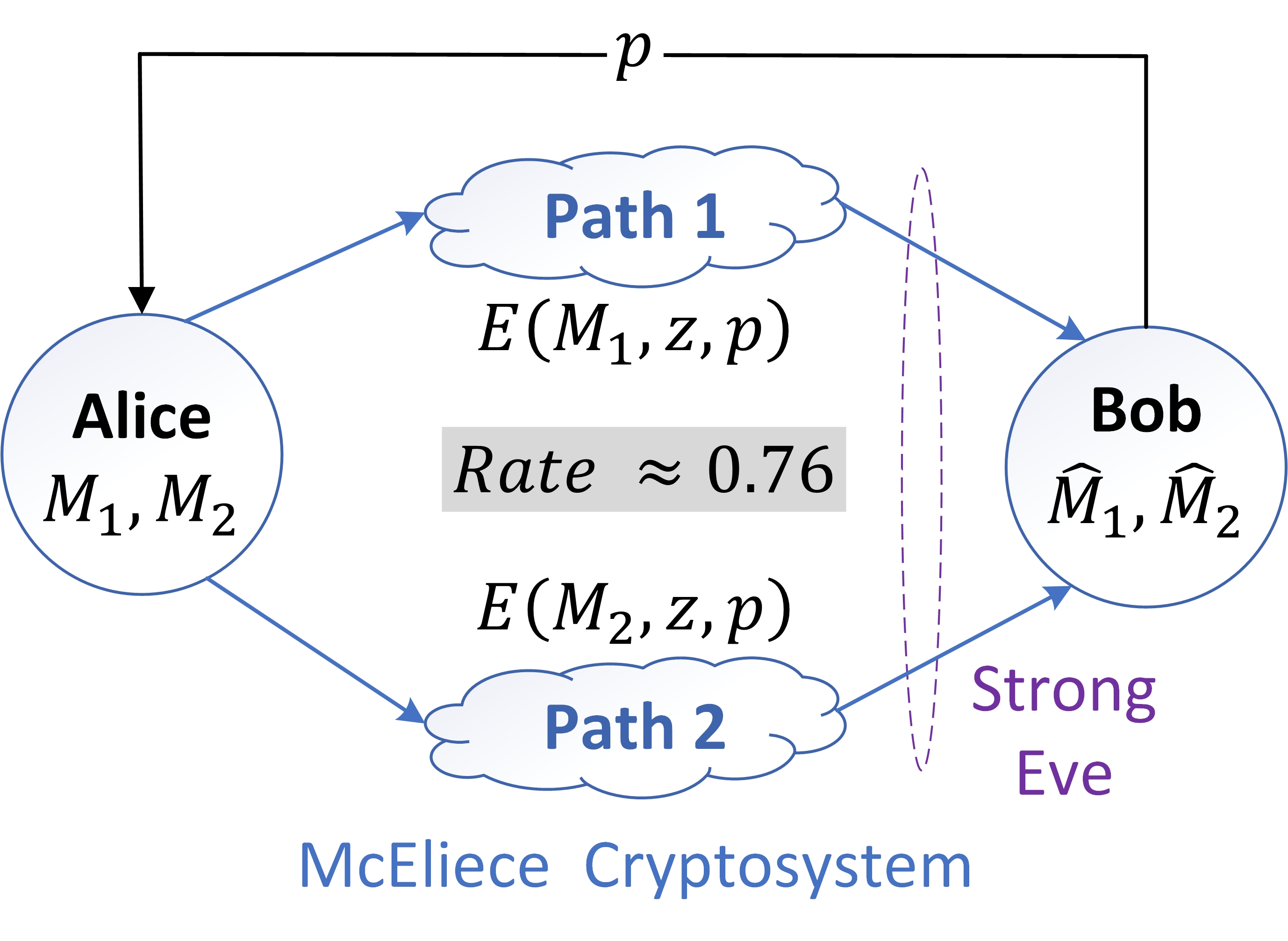}
\caption{}
\end{subfigure}
\begin{subfigure}[b]{0.32\textwidth}
\includegraphics[scale=0.4]{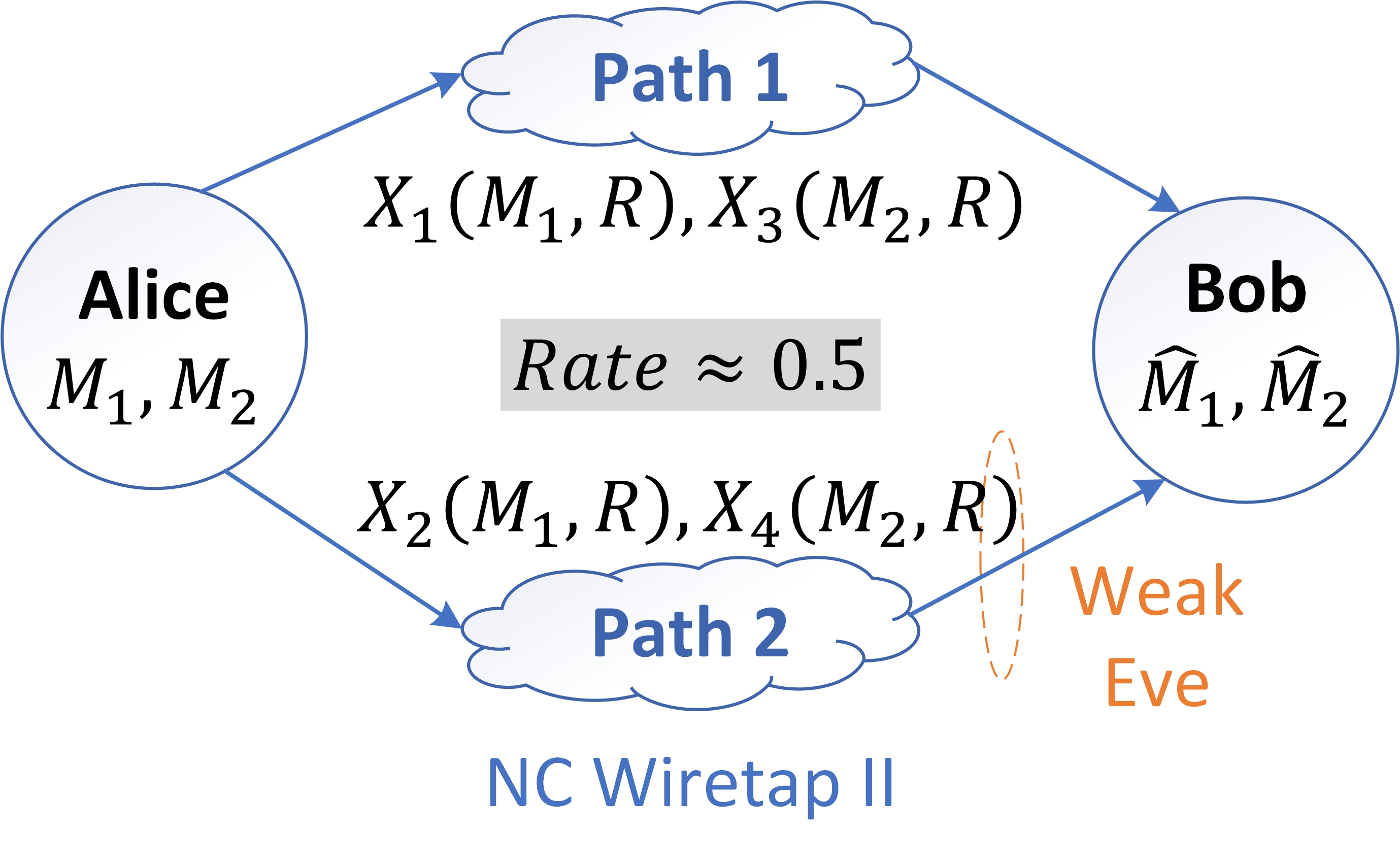}
\caption{}
\end{subfigure}
\end{flushleft}
%%%%%%%%%%%%%%%%%%%%%%%%%%%%
\begin{flushleft}
\centering
\begin{subfigure}[b]{0.32\textwidth}
\includegraphics[scale=0.4]{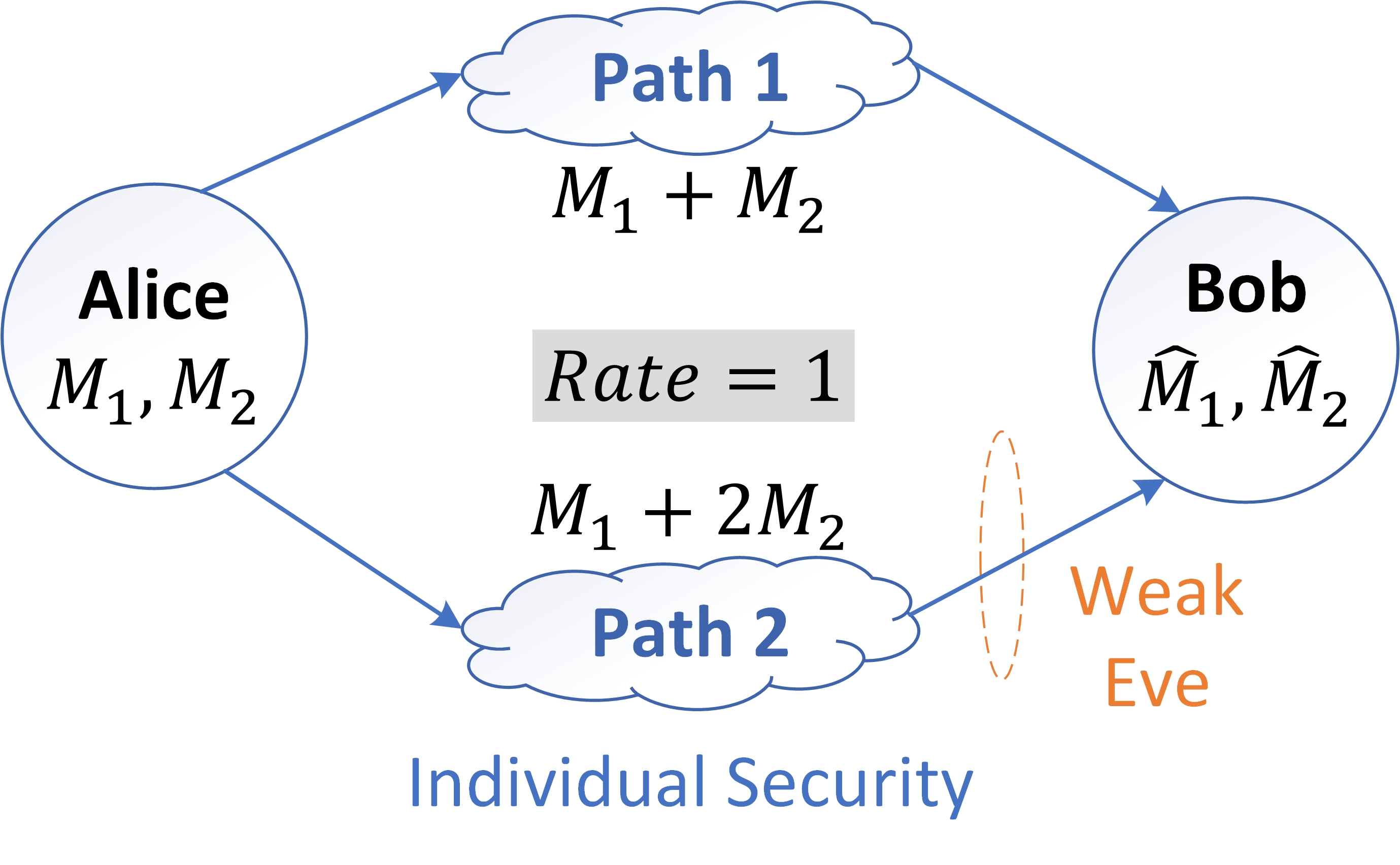}
\caption{}
\end{subfigure}
\begin{subfigure}[b]{0.32\textwidth}
\includegraphics[scale=0.4]{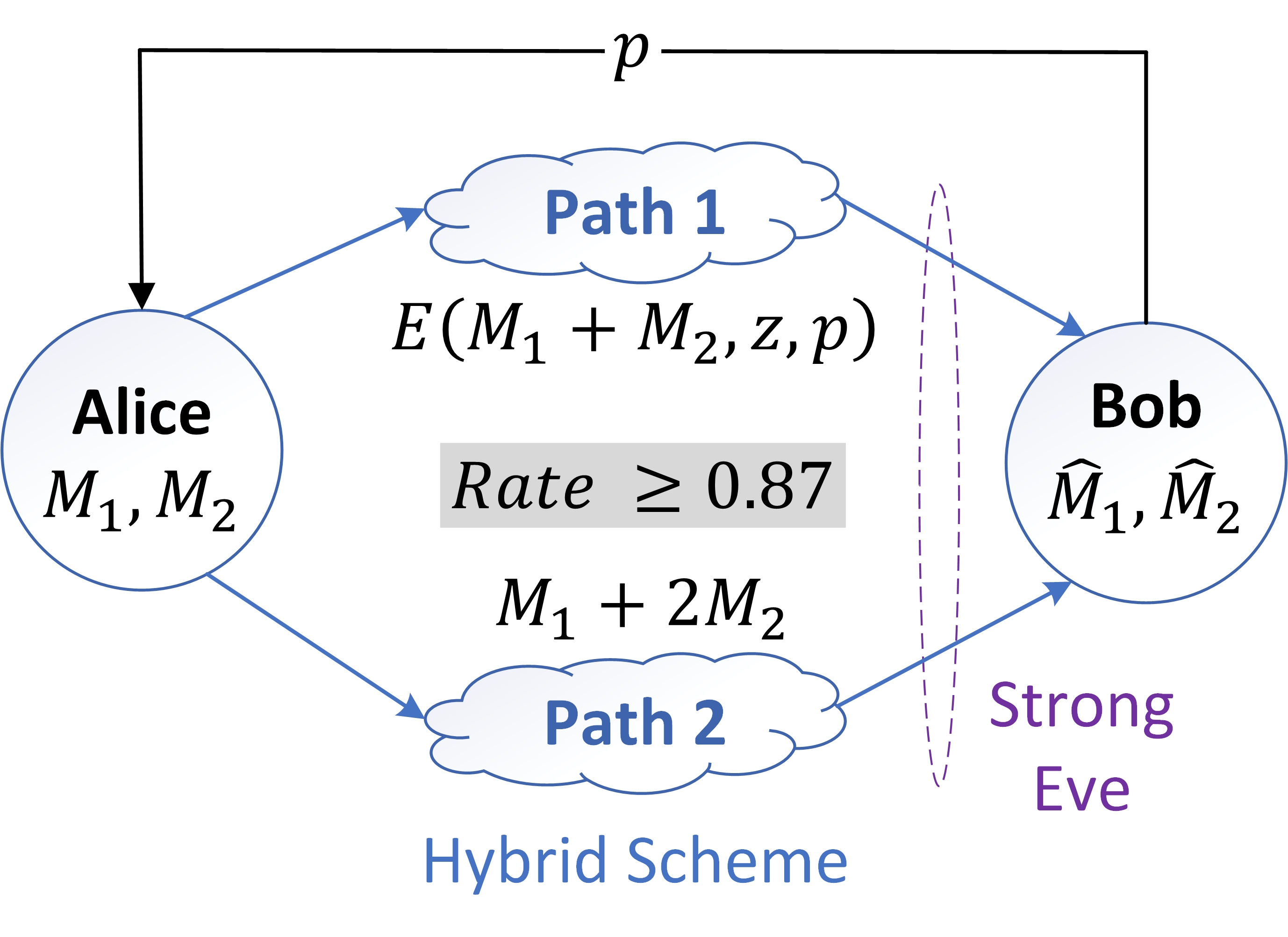}
\caption{} \label{fig:foobar hybrid}
\end{subfigure}
\off{\begin{subfigure}[b]{0.32\textwidth}
\includegraphics[scale=0.4]{./fig/Joint}
\caption{}
\end{subfigure}}
\end{flushleft}
\caption{Secure post-quantum solutions for a multipath network with two paths, one source Alice with two messages $M_1$ and $M_2$, one legitimate destination Bob, and two types of possible eavesdropper's Eve, weak and strong, which can obtain the information transmitted over one or two paths, respectively. (a) One-time pad \cite{shannon1949communication} (b) McEliece cryptosystem \cite{mceliece1978public}, using $[2960,2288]$-Goppa code suggested recently for long term security in \cite{bernstein2008attacking} (c) Network coding wiretap II \cite{el2007wiretap} (d) Individual Security for networks \cite{cohen2018secure} (e) Proposed hybrid universal network-coding cryptosystem (HUNCC), with rate that converges to 1 as $\mathcal{O}\left(1/l\right)$\off{ (f) Proposed joint scheme}.}
\label{fig:foobar}
 \vspace{-3mm}
\end{figure}

%%%%%%%%%%%%%%%%%%%%
\subsection*{Main Contribution}
%%%%%%%%%%%%%%%%%%%%

In this work, we consider a new hybrid network-coding cryptosystem (HUNCC) to obtain post-quantum cryptography at high rates. In this secure network-coding scheme, we combine the computational security principles, with the physical layer security primitives, thus introducing a hybrid system which relies on both individual secrecy, and computational secure cryptosystems.

We illustrate this concept via a multi-path secure transmission scheme, where multiple messages are to be sent between Alice and Bob, using parallel links. By doing so, we are able to address one of the main shortcomings of physical layer security, namely that the adversary eavesdropper cannot observe all the messages sent between Alice and Bob. In the system we will introduce shortly, Eve may in fact observe the entirety of the transmission between Alice and Bob.

We will make an assumption on Eve's computational power however, similar to computationally secure systems.
On the other hand, instead of encrypting the entirety of the messages from Alice using public-key cryptography, we only encrypt over some of the communication links.
Under the computational limitation assumption, the links that are encrypted are analogous to losses for Eve -- and thus, we may now employ traditional techniques from physical layer security codes.
In other words, we are able to enforce that the eavesdropper only observe part of the messages, via cryptography, under the computational limitation assumption. This, in turn, allows an increase in the communication rate -- similar to what is done in individual secrecy, while still providing security guarantees that are computationally strong. If Alice and Bob have more communication channels, the communication rate will be even larger. Indeed, the rate goes to one as the number of channels increases. In Table \ref{tab:perf}, we compare the performance of the individual secure coding scheme and the original McEliece cryptosystem to our proposed hybrid scheme.

\begin{table}[!h]
\centering
\normalsize
\begin{tabular}{c c c c}
\toprule
\multicolumn{1}{c}{}&\multicolumn{1}{c}{Ind. Sec. Code \cite{cohen2018secure}}&\multicolumn{1}{c}{HUNCC}&\multicolumn{1}{c}{McEliece Cryp. \cite{sendrier2002security}}\\
\midrule
Encrypted paths      & $0$                     & $c$                             & $l$            \\
%Enc Info. Trans. size            & $\ku$                    & $2^d$ in $c$ and $\ku$ in $(l-c)$  & $2^d$          \\
%Total Info. Trans.          & $l\cdot \ku \cdot \kb$             & $c(2^d-td)+(l-c)\ku$               & $l(2^d-td)$    \\
Information  rate            & $R_{IS}=1$                    & $(cR_{C} + (l-c)R_{IS})/l$        & $R_{C}<1$     \\
Public-key size       & $0$                     & $p_{C}$                    & $p_{C}$   \\
Ind. comp. secrecy       & $0$                     & $\min\{c,1\} \cdot b$           & $b$            \\
Ind. secrecy       & $(l-w)/l$               & $(l-w)/l$                           & $0$            \\
\bottomrule
\end{tabular}
\caption{
For a multipath network with $l$ paths we compare individually secure codes, our hybrid code HUNCC, and the McEliece cryptosystem, where in the original McEliece Cryptosystem, the rate $R_C \approx 0.5$. The parameter $c$ corresponds to the number of encrypted path in HUNCC, $w$ is the number of links that a weak eavesdropper observes, and $b$ is the security level of the McEliece Cryptosystem using public-key $p_C$.
}
\label{tab:perf}
\end{table}

In recently considered heterogeneous networks, one may not assume that the cryptosystem, with the public-key, can be applied on all paths. The HUNCC coding scheme can ensure post-quantum security across the entire network, using the public-key only for the information transmitted over one path. Moreover, we show important applications in which HUNCC is applicable. Specifically, we consider single path communication, distributed storage, ultra-reliable low-latency streaming communications, and the case of myopic adversaries.

For a network with two paths, we illustrate our hybrid scheme in Figure \ref{fig:foobar hybrid}: Alice wants to send a private message $M=[M_1,M_2] \in \mathbb{F}_{q^{u}}^2$ to Bob via $l=2$ communication links. An eavesdropper, Eve, depending on how strong she is, can observe the communication in one or both of these channels.

Alice and Bob agree on a public key encryption scheme $(\mathrm{Enc},\mathrm{Dec},p,s)$. Alice first encodes the message $M$ using the individually secure code with generator matrix $\textbf{G} = \left( \begin{smallmatrix} 1 & 1\\  2 & 1 \end{smallmatrix} \right )$. We denote this encoding by $X = M\textbf{G} = [M_1 + M_2 , M_1 + 2 M_2]$. Alice then encodes $\mathrm{Enc}(X_1,p)$, sends it to Bob via channel 1, and sends $X_2$ via channel 2. Bob then decodes $X_1 = \mathrm{Dec}(\mathrm{Enc}(X_1,p),s)$ to retrieve all of $X$ and multiplies by the inverse of the generator matrix to retrieve the original message $M=X\textbf{G}^{-1}$.

If Eve is a weak eavesdropper, observing only one communication channel, the scheme is information-theoretically individually secure irrespective of her computational power. This occurs because each piece of the message, $M_i$, is independent from any single $X_j$. If Eve is a strong eavesdropper, observing both communication channels, each message $M_i$ will be computationally secure with almost the same security level as the encryption scheme $(\mathrm{Enc},\mathrm{Dec},p,s)$. Indeed, we show in Theorem \ref{theo:level_security} that if the best known attack on $(\mathrm{Enc},\mathrm{Dec},p,s)$ needs $2^b$ operations to break it, then Eve needs at least $2^b - \frac{\epsilon}{2^b}$ operations to determine any message $M_i$, where $\epsilon$ is the amount of operations needed to solve a $2\times2$ linear system.\footnote{Using Gaussian elimination would make $\epsilon \approx 10$ operations.}

We note that for the individually secure code $\textbf{G}$ to work, $\mathbb{F}_{q^{u}}$ must have characteristic larger than $2$. Also, the image of the individually secure code must be injectively mapped into the domain of the encryption function $\mathrm{Enc}$. We consider the following example.

Suppose Alice and Bob agree on using a $128$-bit McEliece cryptosystem. Utilizing the suggestion in \cite{bernstein2008attacking}, Bob selects a $[2960,2288]$-Goppa code with a public key of $1537536$ bits\footnote{Original parameters of McEliece cryptosystem \cite{mceliece1978public}, namely $[1024,524]$-Goppa code with a rate of approximate 0.5 obtain around $58$-bit security level considering recent attacks \cite{bernstein2008attacking}.}. In this case, the domain of the encryption function is $\mathbb{F}_{2}^{2288}$. Alice and Bob must agree on an injective mapping from the image of the code $\textbf{G}$, given by $\mathbb{F}_{q^{u}}$ with characteristic larger than $2$, to $\mathbb{F}_{2}^{2288}$. In this case, they could set $q^{u} = 3^{1443}$ so that $\log_2(q^{u}) \approx 2287.1$ bits. Thus, Alice can map $X_1$ into a $2288$ bit vector and encode it using the Goppa code into $E(X_1,p) \in \mathbb{F}_{2}^{2960}$. Alice will then send $\log_2 |E(X_1,p)| = 2960$ bits through link~1 and $\log_2 |X_2| \approx 2287.1$ bits through link~2. Thus, the total communication cost will be around $5248$ bits giving a communication rate slightly larger than $0.87$. By Theorem \ref{theo:level_security} both messages, $M_1$ and $M_2$ are $128$-bit secure. In Section \ref{rsa_example}, we look at an example where Alice and Bob agree on using an RSA scheme.

The structure of this work is as follows. In
Section \ref{sec:background}, we provide a background on the main security blocks we use in the proposed solution. In Section \ref{sec:model}, we
formally describe the system model. The security notions and the threat models we will use in the paper are defined in Section \ref{sec:security_defs}. In Section \ref{sec:hybrid}, we present our scheme HUNCC with our main results. In Section \ref{sec:efficiency} we analyze the performance of HUNCC and discuss our results. In Section \ref{sec:Applications}, we present a few examples for which HUNCC
is applicable. Finally, we conclude the paper in Section \ref{sec: conclusions}.

%%%%%%%%%%%%%%%%%%%%%%%%%%%%%%%%%%%%%%%%%%%%
\section{Background}\label{sec:background}
%%%%%%%%%%%%%%%%%%%%%%%%%%%%%%%%%%%%%%%%%%%%

In this section we give some background information on the two main building blocks of our scheme: i) computational secure cryptosystems and, ii) information-theoretic individual security. We note that, although we focus on the McEliece Cryptosystem, any computationally secure cryptosystem can be used in our scheme. In Section \ref{rsa_example}, we provide an example of how the original RSA can be used in our scheme.

%%%%%%%%%%%%%%%%%%%%%%%%%%%%%%%%%%%%%%%%%%%%
\subsection{Computationally Secure Cryptosystems}\label{sec:crypto}
We start by introducing some helpful notation and definitions for public-key cryptosystems.

\begin{definition}\label{Crypto_scheme}
    A public-key encryption is a tuple $(\mathrm{Enc},\mathrm{Dec},p,s,\kb,\nb)$  where:
    \begin{itemize}
        \item $\mathrm{Enc}: \{0,1\}^{\kb} \times \mathcal{P} \to \{0,1\}^{\nb}$ is the encryption function, $\mathrm{Dec}: \{ 0,1\}^{\nb} \times \mathcal{S} \to \{0,1\}^{\kb}$ is the deciphering function, and $p \in \mathcal{P}$ and $s \in \mathcal{S}$ represent the public and private key respectively,
        \item For a message $\bar{m} \in \{0,1\}^{k_b}$, a public key $p \in \mathcal{P}$, and the corresponding secret key $s \in \mathcal{S}$, $\mathrm{Dec}(\mathrm{Enc}(\bar{m}, p),s) = \bar{m}$,
    \end{itemize}
    A public-key encryption $(\mathrm{Enc}, \mathrm{Dec},p,s,\kb,\nb)$ has security level $b$ if the best known algorithm to recover $\bar{m}$ with the knowledge of $\mathrm{Enc}(\bar{m})$ and $p$ alone needs to perform at least $2^b$ operations. Finally $(\mathrm{Enc}, \mathrm{Dec},p,s,\kb,\nb)$ is said to have rate $R$ if $R = \kb / \nb$.
\end{definition}
Parameters for long-term security in post-quantum cryptography were suggested in \cite{bernstein2008attacking,PQCRYPTO2015}. These parameters determine a trade-off between the security level, the information rate, and the size of the public-key in the cryptosystems. We focus now on the McEliece cryptosystem and take a look at RSA in Section \ref{rsa_example}.

\subsubsection{The McEliece Cryptosystem}
The original McEliece cryptosystem \cite{mceliece1978public} remains unbroken. In this scheme, Bob produces a generator matrix $\textbf{G} \in \mathbb{F}_{q}^{\kb \times \nb}$ by randomly choosing an irreducible polynomial of degree $t$ over $GF(2^d)$, which corresponds to an irreducible Goppa code of length $\nb = 2^d$ and dimension $\kb \geq \nb - td$. Note that this code can correct at least $t$ errors\footnote{We note that in \cite[Section 7]{bernstein2008attacking}, they consider $\nb - \sqrt{\nb(\nb-2t-2)}\geq t+1$ errors instead of $t$
errors.} and can be decoded efficiently \cite{bernstein2009introduction}. Bob then generates two matrices with the intent of concealing $\textbf{G}$, a random dense nonsingular matrix $\textbf{S} \in \mathbb{F}_{q}^{\kb \times \kb}$ and a random permutation matrix $\textbf{P} \in \mathbb{F}_{q}^{\nb \times \nb}$. The scheme works as follows.

\noindent \underline{Key Generation}:
\begin{itemize}
    \item \underline{Public Key}: Bob generates the public key $(\textbf{G}^{\text{pub}} = \textbf{SGP} ,t)$ , where $\textbf{G}^{\text{pub}} \in \mathbb{F}_{q}^{\kb \times \nb}$. Both Alice and Eve have access to it.
    \item \underline{Private Key}: The private key consists of $(\textbf{S}, D_{\mathcal{G}}, \textbf{P})$, where $D_{\mathcal{G}}$ is an efficient decoding algorithm for $\mathcal{G}$.
\end{itemize}

\noindent \underline{Encryption}: To encrypt a message $\bar{\bold{m}}\in F^{\kb}_{q}$, Alice randomly chooses a vector $\bold{z}\in F^{\nb}$ of weight $t$ and encrypts it as  $\bold{c} = \bar{\bold{m}}\textbf{G}^{\text{pub}} \oplus \bold{z}$.

\noindent \underline{Decryption}: To decrypt the message, Bob first calculates $\bold{c}\textbf{P}^{-1} = \bar{\bold{m}}\textbf{SG} \oplus \bold{z}\textbf{P}^{-1}$, and then applies the decoding algorithm $D_{G}$. Since $\bold{c}\textbf{P}^{-1}$ has hamming distance $t$, it follows that, $\bar{\bold{m}}\textbf{SG} = D_{G}(\bold{c}\textbf{P}^{-1})$. Then, since both $\textbf{G}$ and $\textbf{S}$ are invertible, $\bar{\bold{m}} = (\bar{\bold{m}}\textbf{SG}) \textbf{G}^{-1} \textbf{S}^{-1}$.

Many attempts have been made to improve the performance of the original McEliece cryptosystem by replacing Goppa codes with other families of codes. Most of these, however, have been broken. We summarize some of these attempts in Table \ref{tab:code}.

\begin{table}[!h]
\centering
\normalsize
\begin{tabular}{c c c}
\toprule
\multicolumn{1}{c}{Family Code}&\multicolumn{1}{c}{Proposed by}&\multicolumn{1}{c}{Broken by}\\
\midrule
Goppa                   & McEliece, 1978 \cite{mceliece1978public}           &                   - \\
Reed Solomon            & Niederreiter, 1986 \cite{niederreiter1986knapsack} & Sidelnikov et al, 1992 \cite{sidelnikov1992insecurity} \\
Concatenated            & Niederreiter, 1986 \cite{niederreiter1986knapsack} & Sendrieret al, 1998 \cite{sendrier1998concatenated}    \\
Read-Muller             & Sidelnikov, 1994 \cite{sidelnikov1994public}       & Minder et al, 2007 \cite{minder2007cryptanalysis}      \\
LDPC                    & Monico et al, 2000 \cite{monico2000using}          & Monico et al, 2000 \cite{monico2000using}              \\
Convolutional codes     & L\"{o}ndahl et al, 2012 \cite{londahl2012new}      & Landias et al, 2013 \cite{landais2013efficient}        \\
\bottomrule
\end{tabular}
\caption{Code-based encryption systems.}
\label{tab:code}
\end{table}

\off{\subsubsection{Rivest–Shamir–Adleman (RSA)} The RSA cryptosystem is one of the first public-key cryptosystems \cite{rivest1978method}, and is arguably one of the most widely used. Although RSA remains secure for classical computers, it can be broken using Shor's algorithm \cite{365700} on a quantum computer, if a sufficiently large one is ever to be built. Our main interest in this scheme is to showcase how our proposed hybrid scheme can be used with any cryptosystem.

We present the general idea behind the cryptosystem, noting that in practice, extra care must be taken (with the choice of the parameters, padding of the message, etc) to make this scheme secure. This scheme is based on two classical results from number theory. \emph{Fermat's little theorem}, which states that $\bar{a}^{\bar{p}-1} \equiv 1 \pmod{\bar{p}}$ for any integer $\bar{a}$ and prime $\bar{p}$ which does not divide $\bar{a}$. And a consequence of the \emph{Chinese remainder theorem}, stating that if $\bar{p}$ and $\bar{q}$ are prime numbers then, for any integers $\bar{a}$ and $\bar{b}$, $\bar{a} \equiv \bar{b} \pmod{\bar{p}\bar{q}}$ if and only if $\bar{a} \equiv \bar{b} \pmod{\bar{p}}$ and $\bar{a} \equiv \bar{b} \pmod{q}$. The scheme works as follows.

Bob chooses two distinct prime numbers $\bar{p}$ and $\bar{q}$ and computes $\bar{n}=\bar{p}\bar{q}$. He then computes the least common multiple of $\bar{p}-1$ and $\bar{q}-1$, $\lambda = \lcm (\bar{p}-1,\bar{q}-1)$, and chooses an integer $\bar{e}$ coprime to $\lambda$ and such that $1 < \bar{e} < \lambda$. Finally, he computes $\bar{d}$ such that $\bar{d} \equiv \bar{e}^{-1} \pmod{\lambda}$. For the scheme to work, Alice and Bob must agree on some reversible protocol, known as a padding scheme, which transforms the message $M$ into a natural number $\bar{m}<\bar{n}$.

\noindent \underline{Key Generation}:
\begin{itemize}
    \item \underline{Public Key}: The public key is given by $(\bar{n},\bar{e})$.
    \item \underline{Private Key}: The private key is given by $\bar{d}$.
\end{itemize}

\noindent \underline{Encryption}:  Alice encrypts the padded message $\bar{m}<\bar{n}$, as $\bar{c} \equiv \bar{m}^{\bar{e}} \pmod{\bar{n}}$.

\noindent \underline{Decryption}: To decrypt the message, Bob first computes $\bar{c}^{\bar{d}}$. As a consequence of Fermat's little theorem and the Chinese remainder theorem, it can be shown that $\bar{c}^{\bar{d}} \equiv (\bar{m}^{\bar{e}})^{\bar{d}} \equiv \bar{m} \pmod{\bar{n}}$. Bob then retrieves the original message by reversing the padding scheme.}

%%%%%%%%%%%%%%%%%%%%%%%%%%%%%%%%%%%%%%%%%%%%
\subsection{Individual Information-theoretic Security}\label{sec:IS}

Individual security operates under the assumption that Eve is a weak eavesdropper $Y_{E_w}$, i.e. only has access to $w$ communication paths. The privacy guarantee is that, while Bob is able to decode completely all the $\ku \in \mathbb{F}_{q^{u}}$ messages transmitted over the network of length $\kb \in \mathbb{F}_{2}$ bits each, Eve is ignorant with respect to each individual message. Thus,
\[
    H(M_j|Y_{E_{w}}) = H(M_j) \text{ for all } j \in \{1, \ldots , \ku \}.
\]

In general, individual security does not imply perfect privacy over the entire messages transmitted. Eve may potentially obtain information about mixtures of the encrypted messages transmitted. In many applications, this information is considered insignificant \cite{bhattad2005weakly,silva2009universal,silva2011universal,mansour2014secrecy,chen2015individual,mansour2015individual,mansourindividual, goldenbaum2015multiple,chensecure, mansour2015individual1,cohen2018secure}. If the messages are statistically independent, this means that,
\[
    I(M_{\ku};Y_{E_{w}}|M_1,\ldots,M_{\ku-1})\\ = H(M_{\ku}) - H(M_{\ku}|Y_{E_{w}},M_1,\ldots,M_{\ku-1}) \geq I(M_{\ku};Y_{E_{w}}).
\]
In \cite{cohen2018secure}, individually security is generalized to $k_s$-individual perfect secrecy, where $\ks \leq \ku-w$, such that the eavesdropper has zero mutual information with any set of $\ku-w$ messages i.e.,
\begin{equation}\label{level_is}
    I(M^{\ku-w},Y_{E_{w}}) = 0 .
\end{equation}
And in \cite{matsumoto2017universal}, the assumption that the messages be independently and uniformly distributed, is removed.

The authors in \cite{cohen2018secure} present two code constructions which satisfy \eqref{level_is}, a random code over a binary field, and a $(\ku,w)$-linear code over a field $\mathbb{F}_{q^u}$ with $u \geq \ku$, where
\[
    \ku \geq \Bigg\lceil \frac{l}{l-w}\Bigg\rceil\geq 2,
\]
and the number of paths, $l$, is greater or equal to $\ku$.
We now show a detailed code construction of the linear code.

\noindent \underline{Code Generation}:
Let $\mathcal{C}$ be a linear code over $\mathbb{F}_{q^u}$, with $u\geq \ku$, of length $\ku$ and dimension $w$, and set $\ks = \ku-w$. Let $\textbf{G}_{IS}^{\star\star} \in \mathbb{F}_{q^u}^{w \times \ku}$ be a generator matrix for $\mathcal{C}$ and $\textbf{G}^{\star}_{IS} \in \mathbb{F}_{q^u}^{\ks \times \ku}$ a generator for the null space of $\mathcal{C}$. Finally, let $\textbf{H}_{IS} \in \mathbb{F}_{q^u}^{\ks \times \ku}$ and $\tilde{\textbf{G}}_{IS} \in \mathbb{F}_{q^u}^{w \times \ku}$, be the parity check matrix and the basis matrix for the code $\mathcal{C}$, respectively, i.e. such that, $\textbf{H}_{IS}\textbf{G}^{\star T}_{IS}=\textbf{I}$ and  $\tilde{\textbf{G}}_{IS}\textbf{G}_{IS}^{\star\star}=\textbf{I}$ yet $\tilde{\textbf{G}}_{IS}\textbf{G}^{\star}_{IS}=0$. Then, the individual security code is generated by $\left[ \begin{smallmatrix} \textbf{G}^{\star}_{IS} \\ \textbf{G}_{IS}^{\star\star} \end{smallmatrix} \right] \in \mathbb{F}_{q^u}^{\ku \times \ku}$.

It is important to note that unlike in public-key cryptosystems, and in particular the McEliece cryptosystem, in physical-layer security schemes, the generation matrix and the code is public. Thus, we can assume that both Bob and Eve have access to all the matrices described above.

\noindent \underline{Encryption}:
Alice encrypts the message $M \in \mathbb{F}_{q^u}^{\ku \times 1}$ as $X^T = M^T \left[ \begin{smallmatrix} \textbf{G}^{\star}_{IS} \\ \textbf{G}_{IS}^{\star\star} \end{smallmatrix} \right] \in \mathbb{F}_{q^u}^{1 \times \ku}$.

\noindent \underline{Decryption}: To decrypt the message, Bob uses the parity check matrix $\textbf{H}_{IS}$ and the basis matrix $\tilde{\textbf{G}}_{IS}$ to compute $(M_{1};\ldots;M_{k^{\prime}}) = \textbf{H}_{IS} X$ and $(M_{k^{\prime}+1};\ldots;M_{k}) = \tilde{\textbf{G}}_{IS} X$. By \cite{cohen2018secure}, since Eve only observes $w$ symbols from the encrypted vector $X^T$ she is not able to decode and is completely ignorant with respect to any set $k-w$ symbols of information transmitted over the network.

%%%%%%%%%%%%%%%%%%%%%%%%%%%%%%%%%%%%%%%%%%%%
\section{System Model}\label{sec:model}
%%%%%%%%%%%%%%%%%%%%%%%%%%%%%%%%%%%%%%%%%%%%
We consider a network consisting of a source node, Alice, connected to a destination node, Bob, via $l$ noiseless independent communication links. The network is illustrated in Figure~\ref{fig:system_model}. The goal is for Alice to transmit $\ku$ messages $M=[M_1;\ldots;M_{\ku}] \in F^{\ku}_{q^u}$, of length $\kb \in \mathbb{F}_{2}$ bits each, privately to Bob in the presence of an eavesdropper, Eve. In this paper, our main focus is on post-quantum security. Thus, we assume Eve has access to a quantum computer.

We denote by $Y = [Y_1;\ldots;Y_l]$ the vector of messages which Alice sends to Bob via each communication link. These messages must be such that Bob is able to decode $M$ from them. Thus, we say $Y$ is reliable if $H(M | Y) = 0$. The messages $Y$, however, should satisfy certain properties if Eve is not to decode $M$ herself. This will depend on how powerful Eve is.

We consider two types of Eve. A strong Eve, $E_s$, which observes all communication links, having access to the entirety of $Y$. And a weak Eve, $E_w$, which only observes a subset of the communication links. We denote the observations of each by $Y_{E_s}$ and $Y_{E_w}$, respectively. We note that because of reliability, information-theoretic privacy is not obtainable against a strong Eve.

%%%%%%%%%%%%%%%%%%%%%%%%%%%%%%%%%%%%%%%%%%%%
\section{Security Definitions}\label{sec:security_defs}
In this section, we define the security notions and the threat models used in the rest of the paper.
Throughout, we assume a \emph{ciphertext-only attack} model, i.e., the adversary Eve only has access to $Y_{E_s}$ or $Y_{E_w}$. Standard techniques from cryptography allow to expand the threat model to chosen-plaintext attacks, see e.g. \cite{biryukov1998differential}.
We first define the notion of computational security.

\begin{definition}\label{def:sec_level}
A cryptosystem with message $M$ and ciphertext $c(M)$ has security level $b$ if the best known algorithm needs to perform at least $2^b$ operations to decode $M$ from the observation of $c(M)$ alone in expectation\footnote{The expectation is taken over the distribution of $X$, and in the case that the best decoding algorithm is a randomized algorithm, also taken over the randomized algorithm. Also note that while this notion of security is in expectation, it can be generalized to a stronger notion, e.g. that the algorithm takes less at least $2^b$ operations with probability at least $1-\epsilon$, for some small $\epsilon$, see \cite{micciancio2018bit}.}.
\end{definition}

It should be noted that this definition of security level imposes implicitly a restriction on the size of the encoded message $M$, and on its distribution. Indeed, assuming a public-key cryptosystem, the adversary may always use a \emph{brute-force attack}, by guessing potential message inputs until the correct one is found. Generally, it is assumed that the message $M$ takes a value uniformly at random in a set $\{0,1\}^{\kb}$, and thus it follows that $\kb \geq b$. In fact, most cryptosystems require $\kb$ to be strictly larger than $b$. Throughout this paper, we opt to leave this relationship implicit, and rather focus on the security level $b$, and select the appropriate $\kb$ based on this security level.

The computation security level of Definition~\ref{def:sec_level} is relevant in the case of a single link, or equivalently a single message.
When there are several messages being transmitted on each link, it is desirable to provide a security guarantee that applies to each message \emph{individually}. For a weak Eve, this can be obtained via information-theoretic individual secrecy.

\begin{definition}\label{individually}
A cryptosystem with messages $M_1,\ldots, M_l$ is $(l,w)$-individually secure if for every $\omega \subset [l]$, with $|\omega| = w$, $H(M_s | Y_{\omega}) = H(M_s)$ where $Y_\omega = [M_i]_{i \in \omega}$.
\end{definition}

\begin{remark}
It should be noted that the linear code construction given in Section~\ref{sec:IS} actually guarantees a stronger notion of individual secrecy, c.f. \eqref{level_is}. While such distinction is important -- it reduces the possible set of joint messages $M_1,\ldots,M_l$ exponentially -- the notion of individual secrecy in Definition~\ref{individually} is more convenient to manipulate, and turns out to be sufficient to prove our main results.
\end{remark}
%\ale{Salman: maybe it is better to use here the generalized notation of  $\ks$-individual perfect secrecy, where $\ks \leq \ku-w$ given in \eqref{level_is} ?}

For a strong Eve, which can observe the entirety of the sent messages, $(l,w)$-individual security is unobtainable. Instead, we propose a notion of \emph{individual computational security}, which states that the decoding of any message on any of the paths would require $2^b$ operations.

\begin{definition}[Individual Computational Secrecy]\label{def:sec_level_ind}
A cryptosystem with messages $M_1,\ldots,M_l$ and ciphertexts $c_i \triangleq c_i(M_1,\ldots,M_l)$, for $i = 1,\ldots, l$ has security level $b$ if the best known algorithm needs to perform at least $2^b$ operations to decode any $M_j$, $j = 1,\ldots,l$ from the observations $\mathbf{c}_1,\ldots,\mathbf{c}_l$, in expectation.
\end{definition}

Note that individual computational secrecy implies in the computational security of the cryptosystem with $M = [M_1,\ldots,M_l]$ and $c(M) = [c_1(M_1,\ldots,M_l),\ldots, c_l(M_1,\ldots,M_l)]$. It is therefore a strictly stronger notion of security.

%%%%%%%%%%%%%%%%%%%%%%%%%%%%%%%%%%%%%%%%%%%%
\section{Hybrid Universal Network-Coding Cryptosystem (HUNCC)}\label{sec:hybrid}
%%%%%%%%%%%%%%%%%%%%%%%%%%%%%%%%%%%%%%%%%%%%
\begin{figure}[t]
    \centering
    \includegraphics[scale=0.5]{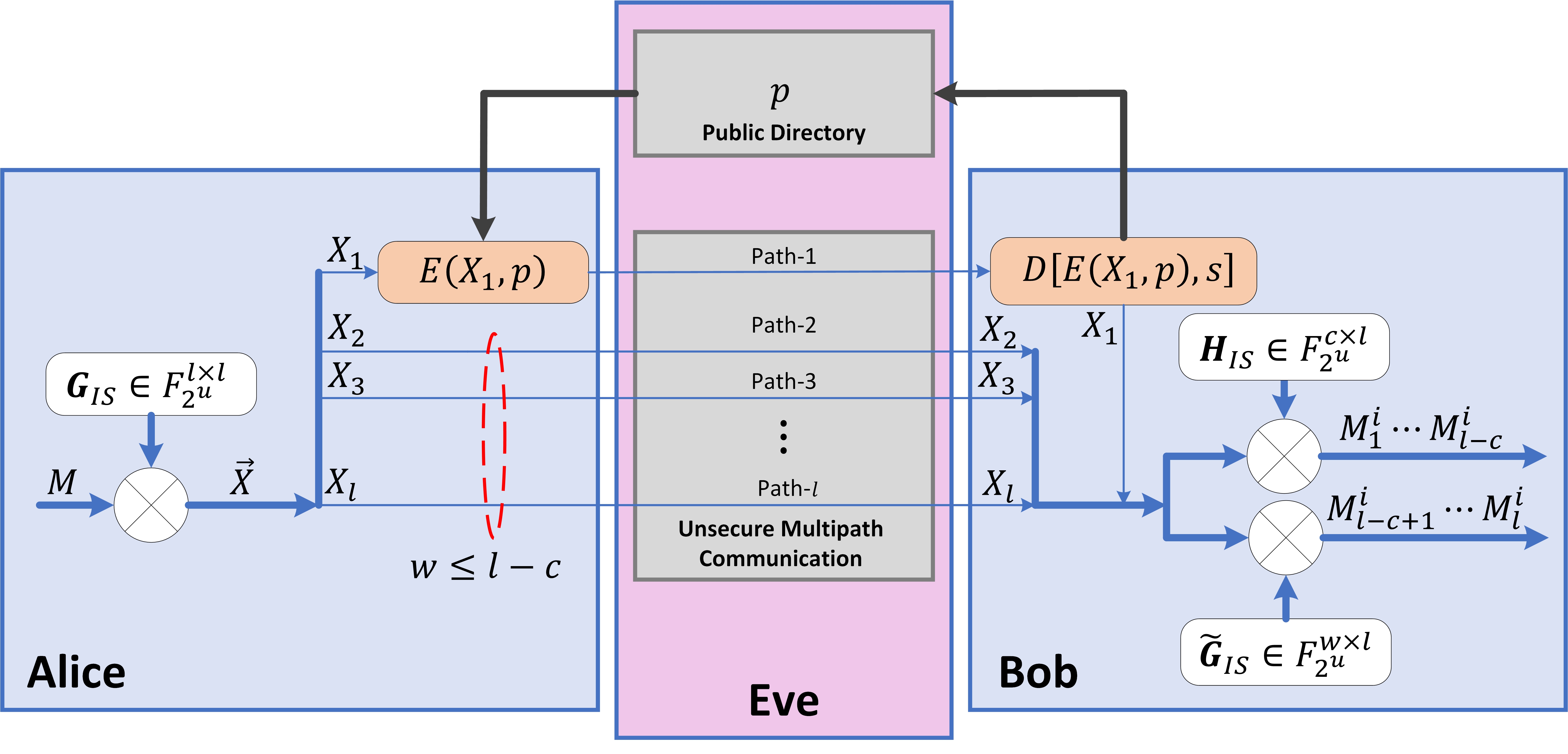}
    \caption{Hybrid post-quantum secure multipath network scheme.}
    \label{fig:Genrral_Scheme}
    \vspace{-0.2in}
\end{figure}

In this section, we present our proposed hybrid cryptosystem, which we refer to as Hybrid Universal Network-Coding Cryptosystem (HUNCC). In Figure \ref{fig:Genrral_Scheme}, we illustrate how HUNCC operates on a multipath network with $l$ communication links.

At a high level, the cryptosystem works as follows. Alice and Bob agree on a public-key cryptosystem $(\mathrm{Enc},\mathrm{Dec},p,s,\kb,\nb)$ as in Definition \ref{Crypto_scheme}. Alice then selects $c$ of the $l$ links to be encrypted links. This alone does not provide full security, as the $l - c$ links which do not experience encryption are unsecured. To solve this, we introduce an extra step before the encryption phase in which the messages are mixed using an individually secure linear code as in Section \ref{sec:IS}. By doing this, we show in Theorem \ref{theo:level_security}, that each message is now computationally secure. The security performance and the total communication rate of the scheme will depend on the choice of $c$ and of the parameters of the public-key cryptosystem.

\begin{figure}[t]
    \centering
    \includegraphics[scale=0.63]{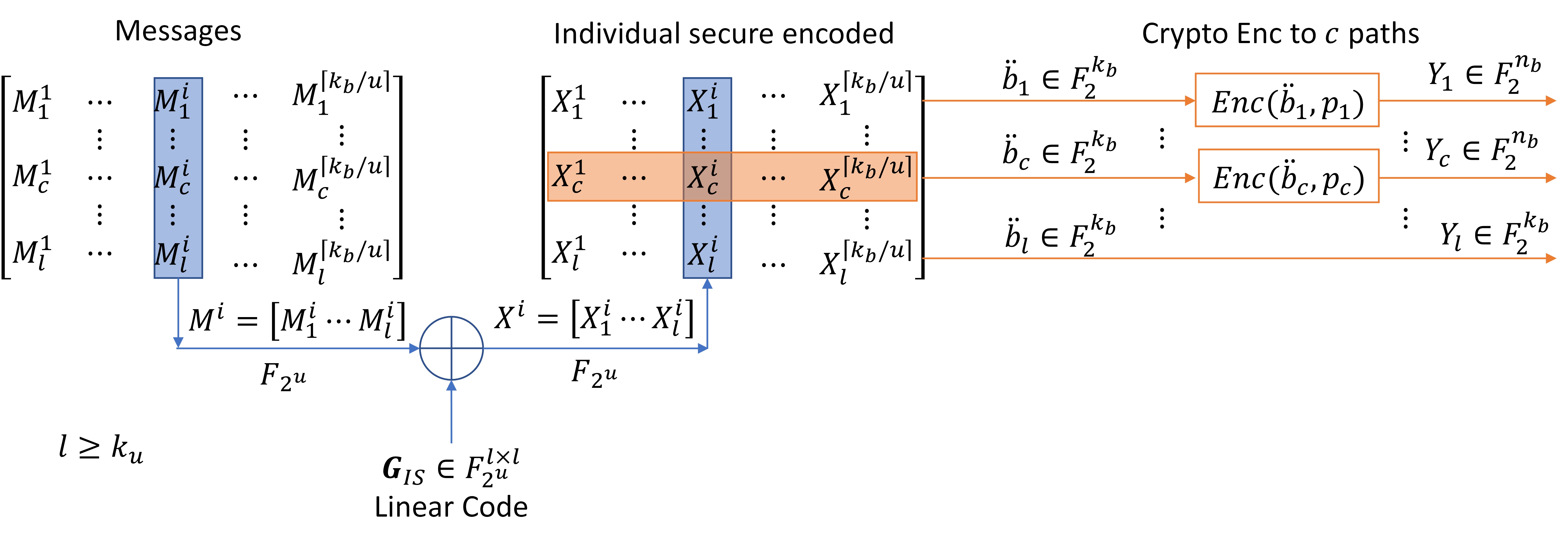}
    \caption{HUNNC encoding scheme at Alice.}
    \label{fig:Alice_enc}
    \vspace{-0.2in}
\end{figure}

%\begin{align}
%    H(M_1^{j}, \ldots, M_{1}^{[kb/u]}|X_{1}^{1},\ldots,X_{1}^{[k_b/u]}) = 0
%\end{align}

\subsection{Scheme Description}

We now give a more detailed explanation of HUNCC, as presented in Algorithm \ref{alg:hybrid_scheme}. We start by looking at the encoding process at Alice, illustrated in Figure~\ref{fig:Alice_enc}. Let $(\mathrm{Enc},\mathrm{Dec},p,s,\kb,\nb)$ be a public-key cryptosystem, as described in Definition \ref{Crypto_scheme}, with security level $b$. As usual, the public key is generated by Bob and provided to Alice over some public communication channel. Alice chooses a number $c$ of paths to be encrypted. Without loss of generality, we let the paths indexed by $1,\ldots,c$ to be the encrypted ones. Let $u \geq l$, be fixed, and consider blocks of messages $M^{(1)}, \ldots,M^{(\lceil \kb / u \rceil )}$, where each $M^{(i)} = [M_1^{(i)}, \ldots, M_{l}^{(i)}]$ with $M_j^{(i)} \in \mathbb{F}_{2^u}$ is generated independently, and uniformly at random. Let $\textbf{G}_{\mathrm{IS}}^{\star\star} \in \mathbb{F}_{2^u}^{w \times l}$ be an $(l,w)$-individually secure linear code\footnote{For practical consideration, if the number of the paths $l$ is high, such that one receives more than $\kb$ bits representing $l$ in the binary field, it is possible to divide the set of messages, $\ku$, Alice encodes together and perform the same process for each set. This division can be used as well to reduce the field size of the individual security code.} as in Section \ref{sec:IS}, where $w \leq l-c$. Select a generator matrix $\textbf{G}_{\mathrm{IS}}^{\star} \in \mathbb{F}_{2^u}^{c \times l}$ for the null space of the code. Note that since $u \geq l$ and $l-w \geq 1$, such a code exists. The individual secrecy encoding matrix is then given by $\textbf{G}_{IS} = \left[ \begin{smallmatrix}   \textbf{G}^{\star}_{IS} \\ \textbf{G}_{IS}^{\star\star} \end{smallmatrix} \right] .$ Thus, the vectors $X^{(1)}, \ldots, X^{(\lceil \kb / u \rceil )}$, where $X^{(i)} = M^{(i)} \textbf{G}_{\mathrm{IS}}$ correspond to an $(l,w)$-individual secrecy encoding of $M^{(i)}$ (c.f. lines 1-5 in Algorithm~\ref{alg:hybrid_scheme}).

Now, for every path $i \in [c]$, consider the collection of symbols $X_{i}^{(1)},\ldots, X_{i}^{(\lceil \kb / u \rceil)}$. Since $\mathbb{F}_{2^u} \simeq \mathbb{F}_2^{u}$, the collection $X_{i}^{(1)},\ldots, X_{i}^{(\lceil \kb / u \rceil)}$ can be injectively mapped into a sequence of bits $\ddot{\mathbf{b}}_{i}$ of length $\kb$. Each $\ddot{\mathbf{b}}_i$ is encoded via the public-key encryption before being sent, i.e., each link $i$ transmits $\mathbf{y}_i = \mathrm{Enc}(\ddot{\mathbf{b}}_i, p_i)$. Note that $\mathbf{y}_i$ is of length $n$ (c.f. lines 6-13 in Algorithm~\ref{alg:hybrid_scheme}).

For the paths $i > c$, Alice directly sends the collection $X_{i}^{(1)},\ldots, X_{i}^{(\lceil \kb / u \rceil)}$ unencrypted. For consistency, it will be easier to also assume that the data sent on the path is a bit sequence $\ddot{\mathbf{b}}_i$, this time of length $\kb$ (c.f. lines 14-16 in Algorithm~\ref{alg:hybrid_scheme}).

\begin{algorithm}
  \caption{Hybrid Universal Network-Coding Cryptosystem (HUNCC)}
  \label{alg:hybrid_scheme}
  \begin{algorithmic}[1]
    \Statex\textbf{Input:}  At Alice, $\ku$ messages $M=[M_1;\ldots;M_{\ku}] \in F^{\ku}_{q^u}$ of length $\kb \in \mathbb{F}_{2}$ bits each

    \Statex\textbf{\underline{Encoding scheme at Alice}:}
    \State \textbf{Stage 1:} Individual secrecy encoding
    \State For $u\geq l$, represent the $\ku$ messages in a block $[M^{(1)}, \ldots, M^{(\lceil \kb / u \rceil )}] \in \mathbb{F}_{2^{u}}$
    \ForEach{column $i\in\{1,\ldots,\lceil \kb / u \rceil\}$ in the block  $[M^{(1)}, \ldots,M^{(\lceil \kb / u \rceil )}]$}
        \State $X^{(i)} = M^{(i)} \textbf{G}_{\mathrm{IS}}$
    \EndFor
    \State \textbf{Stage 2:} Public-key encryption
    \ForEach{path $1\leq i \leq c$ which support cryptosystem}
        \State $\ddot{\mathbf{b}}_{i}\in \mathbb{F}_2^{\kb} \leftarrow [X_{i}^{(1)},\ldots, X_{i}^{(\lceil \kb / u \rceil)}] \in \mathbb{F}_{2^{u}}$
        \State $\mathbf{y}_i = \mathrm{Enc}(\ddot{\mathbf{b}}_i, p_i)$
    \EndFor

    \Statex\textbf{\underline{Transmission over the multipath network}:}
    \ForEach{path $1\leq i \leq c$ which support cryptosystem}
        \State Transmit $\mathbf{y}_i$
    \EndFor
    \ForEach{remaining path $1\leq i \leq (l-c)$ without cryptosystem}
        \State Transmit $\mathbf{y}_i = [X_{i}^{(1)},\ldots, X_{i}^{(\lceil \kb / u \rceil)}] \in \mathbb{F}_{2^{u}}$ (if required $\mathbf{y}_i \in \mathbb{F}_2^{\kb} \leftarrow \mathbf{y}_i \in \mathbb{F}_{2^{u}}$)
    \EndFor

    \Statex\textbf{\underline{Decoding scheme at Bob}:}
    \State \textbf{Stage 1:} Public-key decryption
    \ForEach{path $1\leq i \leq c$ with cryptosystem}
        \State $[X_{i}^{(1)},\ldots, X_{i}^{(\lceil \kb / u \rceil)}] \leftarrow \mathrm{Dec}(\mathbf{y}_i, s_i)$
    \EndFor
    \State \textbf{Stage 2:} Individual secrecy decoding
    \ForEach{column $i\in\{1,\ldots,\lceil \kb / u \rceil\}$ in the block  $[X^{(1)}, \ldots,X^{(\lceil \kb / u \rceil )}]$}
        \State $(M_{1}^{i};\ldots;M_{l-c}^{i}) = \textbf{H}_{IS} X^{i}$
        \State $(M_{l-c+1}^{i};\ldots;M_{l}^{i}) = \tilde{\textbf{G}}_{IS}X^{i}$.
    \EndFor
    \Statex\textbf{Output:} At Bob, $M=[M_1;\ldots;M_{\ku}] \in F^{\ku}_{q^u}$
  \end{algorithmic}
\end{algorithm}

We now detail the decoding process at Bob. As per Section \ref{sec:model}, we assume that all paths are error-free. Hence, Bob obtains all the messages transmitted over the network. For the $c$ encrypted paths, Bob uses the  private-key to decode the messages (c.f. lines 17-20 in Algorithm~\ref{alg:hybrid_scheme}). Thus, Bob obtains $[X_{i}^{(1)},\ldots, X_{i}^{(\lceil \kb / u \rceil)}] = \mathrm{Dec}(\mathbf{y}_i, s_i)$ for every $i \in [c]$. The messages obtained via the remaining $l-c$ paths were unencrypted. Thus, together with the decrypted messages from the first $c$ paths, Bob has the entirety of $X^{(1)}, \ldots,X^{(\lceil \kb / u \rceil )}$. Now, for each $i$-th column in the estimated encoded block, Bob uses the parity check $\textbf{H}_{IS} \in \mathbb{F}_{2^u}^{c \times l}$ and the basis matrix $\tilde{\textbf{G}}_{IS} \in \mathbb{F}_{2^u}^{w \times l} $, as defined in Section \ref{sec:IS}, to obtain the original messages transmitted (c.f. lines 21-25 in Algorithm~\ref{alg:hybrid_scheme}), i.e. $(M_{1}^{i};\ldots; M_{l-c}^{i}) = \textbf{H}_{IS}X^{i}$
and $(M_{l-c+1}^{i};\ldots; M_{l}^{i}) = \tilde{\textbf{G}}_{IS}X^{i}$.

\subsection{Security and Information Rate}

We prove that HUNCC is computationally secure, as per Definition~\ref{def:sec_level_ind}. This makes HUNCC well suited for dealing with a strong Eve which observes the entirety of the communication links. The proof relies on the following elementary lemma, which is the key interface between the computational security of the public-key cryptosystem, and the individual secrecy of the linear code.

\begin{lemma}\label{lem:lem1}
Let $f: \mathcal{X} \to \mathcal{V}$ be an injective function with computational security level $b$, i.e., the inversion function $f^{-1}(f(X)) = X$ takes at least $2^b$ operations in expectation to complete. Consider an arbitrary pair of random variables $(X,Y)$, such that $X \in \mathcal{X}$ and $Y \in \mathcal{Y}$ are independent. Then, any function $g: \mathcal{V} \times \mathcal{Y} \to \mathcal{X}$ such that $g(f(X),Y) = X$ takes at least $2^b$ operations to complete in expectation.
\end{lemma}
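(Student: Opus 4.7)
The plan is to prove the lemma by contrapositive: I will show that if there exists a function $g$ as described which can be computed in strictly fewer than $2^b$ operations in expectation, then one can construct an inverter for $f$ running in strictly fewer than $2^b$ operations in expectation, contradicting the assumed security level of $f$.

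First, I would exploit the independence of $X$ and $Y$ to decouple the roles of the two arguments of $g$. Since the joint distribution of $(X,Y)$ has support equal to $\mathrm{supp}(X)\times\mathrm{supp}(Y)$, the hypothesis $g(f(X),Y)=X$ (as random variables) forces the pointwise identity $g(f(x),y)=x$ for every $x\in\mathrm{supp}(X)$ and every $y\in\mathrm{supp}(Y)$. In other words, any particular realization $y^\star$ of $Y$ already determines a right inverse of $f$ on $\mathrm{supp}(X)$ via $v\mapsto g(v,y^\star)$.

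Next, I would use a standard averaging argument on the computational cost. Let $T(x,y)$ denote the number of operations $g$ performs on input $(f(x),y)$, and let $T := \mathbb{E}[T(X,Y)]$ be the expected cost appearing in the lemma's hypothesis. By the tower property and independence,
\[
T \;=\; \mathbb{E}_Y\bigl[\mathbb{E}_X[T(X,Y)\mid Y]\bigr],
\]
so there exists some $y^\star\in\mathrm{supp}(Y)$ with $\mathbb{E}_X[T(X,y^\star)]\le T$. Define the algorithm $h(v):=g(v,y^\star)$ with $y^\star$ hard-wired as a constant; then $h$ is a deterministic right inverse of $f$ on $\mathrm{supp}(X)$, and its expected cost on input $f(X)$ is at most $T$ (the cost of producing the constant $y^\star$ is $O(1)$ in any reasonable computational model and can be absorbed).

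Finally, if $T<2^b$, then $h$ would be a known algorithm that inverts $f$ in fewer than $2^b$ expected operations, contradicting the assumption that $f$ has computational security level $b$. Hence any such $g$ must require at least $2^b$ operations in expectation. The only subtle point, and the one I would be careful about, is the averaging step together with the hard-wiring of $y^\star$: we must insist that the computational model charges $O(1)$ for loading a fixed constant, since otherwise a pathological distribution of $Y$ supported on extremely long strings could make the ``inverter'' look artificially expensive. With this (standard) convention, the reduction goes through cleanly.
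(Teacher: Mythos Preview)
Your argument is correct and matches the paper's proof essentially line for line: both proceed by contradiction/contrapositive, use the tower property to find some fixed $y_0$ with $\mathbb{E}_X[\mathrm{Comp}(g(f(X),y_0))]<2^b$, and then hard-wire that value to obtain an inverter $h(v)=g(v,y_0)$ violating the security level of $f$. Your additional remarks about the pointwise identity on the support and the $O(1)$ cost of loading a constant are reasonable refinements, but the core reduction is identical to the paper's.
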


\begin{proof}
The proof is by contradiction. Denote by $\mathrm{Comp}(h(x))$ the number of operations that the best algorithm for computing a function $h(x)$ assumes. Assume that there exists a function $g$ such that $g(f(X),Y) = X$ with $\mathbb{E}_{X,Y}[\mathrm{Comp}(g(f(X),Y))] < 2^b$. We have:
\begin{align*}
    \mathbb{E}_{X,Y}[\mathrm{Comp}(g(f(X),Y))] & = \mathbb{E}\left[\mathbb{E}[\mathrm{Comp}(g(f(X),Y))|Y] \right] \\
    & = \sum_{y \in \mathcal{Y}} P_{Y}(y)\mathbb{E}[\mathrm{Comp}(g(f(X),y))] < 2^b.
\end{align*}
Thus, there must exist a specific $y_0 \in \mathcal{Y}$ such that $\mathbb{E}[\mathrm{Comp}(g(f(X),y_0))] < 2^b$. But then, the function $h: \mathcal{V} \to \mathcal{X}$ defined as $h(v) = g(v,y_0)$ completes in less than $2^b$ operations, which is a contradiction.
\end{proof}
Lemma~\ref{lem:lem1} essentially states that an independent and unrelated observation $Y$ may not help in decoding more efficiently a ciphertext $f(X)$.

We are now ready to prove the main theorem of this section.

\begin{theorem}\label{theo:level_security}
Let $u \geq l$, $c \geq 1$, and $(\mathrm{Enc},\mathrm{Dec},p,s,\kb,\nb)$ be a public-key cryptosystem  with security level $b$. Then, the cryptosystem with input $M=[M_1; \ldots; M_l]\in F^{\ku}_{q^u}$ in Algorithm~\ref{alg:hybrid_scheme} is individually computationally secure with level at least $b - \delta / 2^{b}$, where $\delta$ is the number of operations needed to solve an $l\times l$ linear system of equations, i.e. $\delta = \mathcal{O}(l^3)$.
\end{theorem}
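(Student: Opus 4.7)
The plan is to prove the theorem by a reduction: any algorithm $g$ that recovers some $M_j$ from a HUNCC transcript $(\mathbf{y}_1,\ldots,\mathbf{y}_c,X_{c+1},\ldots,X_l)$ in expected time $T$ is turned into an algorithm $h$ that inverts the underlying public-key cryptosystem in expected time $T+\delta$ plus lower-order terms, which by Definition~\ref{def:sec_level} must be at least $2^b$. Lemma~\ref{lem:lem1} supplies the guiding principle: because the square generator $\mathbf{G}_{IS}$ is invertible over $\mathbb{F}_{2^u}$, a uniform source vector $M$ induces i.i.d.\ uniform coordinates $X_1,\ldots,X_l$, so the unencrypted tail $X_{c+1},\ldots,X_l$ is independent of any single encrypted coordinate and, by Lemma~\ref{lem:lem1}, cannot reduce the computational hardness of inverting one encrypted link.

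Concretely, fix an index $j$ and let $\mathbf{y}^*$ be a challenge ciphertext with unknown plaintext $X^*$. The breaker $h$ (i) samples $M'\in\mathbb{F}_{2^u}^l$ uniformly and computes $X'=M'\mathbf{G}_{IS}$; (ii) honestly encrypts $X'_2,\ldots,X'_c$ with the public key to obtain $\mathbf{y}'_2,\ldots,\mathbf{y}'_c$; (iii) forms $\tilde{\mathbf{y}}=(\mathbf{y}^*,\mathbf{y}'_2,\ldots,\mathbf{y}'_c,X'_{c+1},\ldots,X'_l)$ and invokes $g(\tilde{\mathbf{y}})$ to obtain $\hat M_j$; and (iv) solves the linear system $M^{\sharp}\mathbf{G}_{IS}=(X^*,X'_2,\ldots,X'_l)$ for the remaining $M^{\sharp}_k$ and for $X^*$, plugging in $M^{\sharp}_j=\hat M_j$ and the known values $X'_2,\ldots,X'_l$. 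The resulting $l\times l$ system is solvable in $\delta=\mathcal{O}(l^3)$ time by Gaussian elimination, provided the $(l-1)\times(l-1)$ submatrix of $\mathbf{G}_{IS}$ obtained by deleting row $j$ and column $1$ is non-singular, after which a single inner product recovers $X^*=\sum_k M^{\sharp}_k(\mathbf{G}_{IS})_{k,1}$. The distributional soundness of this simulation follows because uniform $M'$ gives uniform $X'$; replacing $X'_1$ by the independent uniform $X^*$ preserves i.i.d.\ uniformity on $(X_1,\ldots,X_l)$; and re-encrypting honestly reproduces the exact joint law of a genuine HUNCC transcript, so $g$ succeeds with its advertised expected cost $T$.

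Combining the pieces, $h$ runs in expected time $T+\delta$ plus the negligible cost of $c-1$ honest encryptions, which by cryptosystem security must satisfy $T+\delta\ge 2^b$; hence $T\ge 2^b-\delta$ up to lower-order terms, translating to a security level of $\log_2(2^b-\delta)\ge b-\delta/2^b$ as stated. The step I expect to require the most care is the last algebraic one: for the reduction to yield \emph{individual} secrecy, the invertibility of the deleted-row-and-column minor of $\mathbf{G}_{IS}$ must hold uniformly over the choice of $j$, which is the MDS-style property implicit in the $(l,w)$-individually secure construction of Section~\ref{sec:IS}. Once this minor condition is verified for every $j$, the reduction works for every coordinate, and joint computational hardness of the cryptosystem upgrades to the individual computational secrecy of Definition~\ref{def:sec_level_ind}.
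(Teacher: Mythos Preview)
Your argument is correct and is essentially the paper's own reduction: convert any $M_j$-recovery algorithm into an inverter for the underlying cryptosystem by solving the linear system that expresses the hidden encrypted coordinate in terms of $M_j$ and the observed unencrypted tail, at an additive cost of $\delta=\mathcal{O}(l^3)$. The paper organizes this in two steps---first invoking Lemma~\ref{lem:lem1} to show that the unencrypted coordinates, being independent of the encrypted one, cannot help invert it, and then invoking Lemma~\ref{lem:linear_codes_MDS} (exactly your deleted-row-and-column minor condition) to pass from $M_j$ together with $X_2,\ldots,X_l$ back to $X_1$---whereas you fold both steps into a single explicit simulation-based reduction; the logical content is the same.
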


\begin{proof}
The proof is in two parts. First, we show that recovering $\ddot{\mathbf{B}}_i$, for $i = 1,\ldots,c$ requires at least $2^b$ operations.
Next, we show that there cannot be a decoding algorithm $g(Y_1,\ldots, Y_l) = M_j$ which completes with too few operations, as otherwise, such decoding could be used to decipher $\ddot{\mathbf{B}}_i$ more efficiently.
For ease of notation, we let $c = 1$, i.e., only one path is being encoded, and $l = 2$, i.e. there are only two paths in total. The general case follows mutadis mutandis.

\textbf{Part 1: Recovering $\ddot{\mathbf{B}}_i$ is hard:} We start by showing that it takes at least $2^b$ operations to recover $\ddot{\mathbf{B}}_1$.
Let $g(\mathbf{Y}_1,\mathbf{Y}_2) = \ddot{\mathbf{B}}_1$ be any decoding function that Eve employs.
We have:
\begin{align*}
    H(\ddot{\mathbf{B}}_1 | \mathbf{Y}_2) &= H(X^{(1)}_1, \ldots, X^{(\kb/u)}_1 | X^{(1)}_2, \ldots, X_2^{(\kb/u)}) \\
    & = \sum_{j = 1}^{k/u} H(X_1^{(j)}|X_2^{(j)}) \\
    & = \sum_{j = 1}^{k/u} H(X_1^{(j)})\\
    & = H(\ddot{\mathbf{B}}_1),
\end{align*}
where the penultimate equality follows from the individual secrecy assumption on $\textbf{G}_{\mathrm{IS}}$. Therefore, we have that $\ddot{\mathbf{B}}_1$ is independent of $\mathbf{Y}_2$.
Since $\mathbf{Y}_1 = \mathrm{Enc}(\ddot{\mathbf{B}}_1)$, we thus have that the decoding function of Eve $g(\mathrm{Enc}(\ddot{\mathbf{B}}_1),\mathbf{Y}_2)$ falls under the setting of Lemma~\ref{lem:lem1}, and thus requires at least $2^b$ operations to complete since $(\mathrm{Enc},\mathrm{Dec},p,s,\kb,\nb)$ has security level $b$.

\textbf{Part 2: Recovering $M_j$ cannot be too easy:}
Let $g(Y_1,Y_2) = M_j = [M_{j}^{1}, \ldots, M_{j}^{[\kb/u]}]$ complete in $2^{\tilde{b}}$ operations for $j = 1, 2$.
Since $Y_2 = \ddot{\mathbf{B}}_2$ can be written equivalently as $[X_2^{1}, \ldots, X_{l}^{[\kb/u]}]$, we can construct the matrix:
\begin{align*}
    A &= \left[ \begin{array}{ccccc}
        M_{j}^1 & \ldots & M_{j}^{i} & \ldots & M_{j}^{[\kb/u]} \\
        X_2^{1} & \ldots & X_{2}^{i} & \ldots & X_{2}^{[\kb/u]}
    \end{array}\right]
\end{align*}
Next, using Lemma~\ref{lem:linear_codes_MDS}, there exist a matrix $\tilde{A}$, which can be constructed from $\mathbf{G}_{\mathrm{IS}}$ such that:
\begin{align*}
    \tilde{A} A &= \left[ \begin{array}{ccccc}
        X_{1}^1 & \ldots & X_{j}^{i} & \ldots & X_{1}^{[\kb/u]} \\
        X_2^{1} & \ldots & X_{2}^{i} & \ldots & X_{2}^{[\kb/u]}
    \end{array}\right]
    =\left[ \begin{array}{c}
        X_1\\
        X_2
    \end{array}\right]
\end{align*}
Note that, in general, performing this matrix multiplication takes no more than $\delta = \mathcal{O}(l^3)$ operations\footnote{Precise constants can be obtained, and depend on the field $F_{2^u}$ over which the matrix inversion is performed. In any case, these constants are negligible when looking at the overall computational security for $b$ in the order of 50 or more.} as we will show in Lemma~\ref{lem:linear_codes_MDS}.
Therefore, using once again the fact that $\ddot{\mathbf{B}}_1$ can be obtained equivalently from $[X_1^{1},\ldots, X_{1}^{[\kb/u]}]$, we just constructed an Algorithm to recover $\ddot{\mathbf{B}}_1$ from the observations $\mathbf{Y}_1,\mathbf{Y}_2$ using $2^{\tilde{b}} + \delta$ operations.
It follows from Part~1 that $\log(2^{\tilde{b}} + \delta) \geq b$, which implies in $\tilde{b} \geq \log(2^b - \delta)$.
Finally, via Taylor expansion, when $\delta$ is sufficiently small, we get:
\begin{align*}
    \log(2^{b} - \delta) = b + \log(1 - \delta / 2^{b}) \approx b - \delta / 2^b.
\end{align*}
Therefore, we get that recovering $M_j$ must take at least $b - \delta / 2^{b}$ operations.
\end{proof}

We now prove Lemma \ref{lem:linear_codes_MDS}, used in the proof of the Theorem above.

\begin{lemma}\label{lem:linear_codes_MDS}
Let $\mathbf{G}_{\mathrm{IS}}$ be an individually secure encoding matrix as defined in Section \ref{sec:IS}, such that $X^{i}=M^{i}\mathbf{G}_{\mathrm{IS}}$.
Let $M^{i}_{c}$ and $X^{i}_{c}$ denotes any set of $c$ symbols from the columns $M^{i}$ and $X^{i}$, and $M^{i}_{l-c}$ and $X^{i}_{l-c}$ the remaining symbols, respectively. For each $i$-th column, given $M^{i}_{c}$ and $X^{i}_{l-c}$ one can obtain $X^{i}_{c}$ in $\mathcal{O}(l^3)$ operations.
\end{lemma}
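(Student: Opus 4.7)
The plan is to set up the recovery of $X^{i}_{c}$ as a square $l \times l$ linear system over $\mathbb{F}_{2^u}$ and then solve it by Gaussian elimination. Fix a column index $i$, and let $S \subset [l]$, $|S|=c$ be the coordinates of $M^{i}$ that are known and $T \subset [l]$, $|T|=l-c$ the coordinates of $X^{i}$ that are known. The identity $X^{i} = M^{i}\,\mathbf{G}_{\mathrm{IS}}$ provides $l$ scalar linear equations in exactly $l$ unknowns, namely $M^{i}_{S^c}$ ($l-c$ unknowns) and $X^{i}_{T^c}$ ($c$ unknowns). So dimensionally the problem is well-posed.

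First I would split the system according to the known/unknown pattern. Restricting the identity $X^{i} = M^{i}\,\mathbf{G}_{\mathrm{IS}}$ to the $l-c$ columns indexed by $T$ isolates a sub-system that does not involve the unknowns $X^{i}_{T^c}$:
\[
    X^{i}_{T} \;-\; M^{i}_{S}\,(\mathbf{G}_{\mathrm{IS}})_{S,T} \;=\; M^{i}_{S^c}\,(\mathbf{G}_{\mathrm{IS}})_{S^c,T},
\]
where $(\mathbf{G}_{\mathrm{IS}})_{A,B}$ denotes the submatrix with rows in $A$ and columns in $B$. The right-hand side coefficient matrix $(\mathbf{G}_{\mathrm{IS}})_{S^c,T}$ is $(l-c)\times(l-c)$. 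Assuming it is invertible, one recovers $M^{i}_{S^c}$ in $\mathcal{O}(l^3)$ operations via Gaussian elimination, and then computes $X^{i}_{T^c} = M^{i}\,(\mathbf{G}_{\mathrm{IS}})_{\cdot,T^c}$ by a single matrix-vector product in $\mathcal{O}(l^2)$ time. Since the dominant cost is the elimination step, the total work stays $\mathcal{O}(l^3)$.

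The main obstacle is justifying the invertibility of $(\mathbf{G}_{\mathrm{IS}})_{S^c,T}$ for every admissible choice of $S$ and $T$; this is strictly stronger than invertibility of the whole matrix $\mathbf{G}_{\mathrm{IS}}$. I would appeal to the construction recalled in Section~\ref{sec:IS}: $\mathbf{G}_{\mathrm{IS}}$ is obtained by stacking a generator of the code $\mathcal{C}$ and a generator of its null space, and the standard choice (e.g.\ Reed--Solomon / Vandermonde over $\mathbb{F}_{q^u}$ with $u \geq l$) makes every square submatrix of $\mathbf{G}_{\mathrm{IS}}$ nonsingular. This MDS-like property is in fact precisely the property that enforces $(l,w)$-individual secrecy in the first place: without it, some subset of $l-c$ coordinates of $X^{i}$ would fail to be uniformly distributed across the choices of the unknown message part, contradicting the secrecy guarantee \eqref{level_is}. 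Hence the same property that makes the scheme secure also makes the submatrix invertible, which closes the argument.
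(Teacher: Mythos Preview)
Your proposal is correct and matches the paper's approach: both reduce the recovery of $X^{i}_{c}$ to an $l\times l$ linear system in the unknowns $M^{i}_{S^c}$ and $X^{i}_{T^c}$, then solve by Gaussian elimination in $\mathcal{O}(l^3)$. The paper phrases solvability via a coset argument (the known $M^{i}_{c}$ fixes a coset and $X^{i}_{l-c}$ then pinpoints the codeword within it) rather than your explicit submatrix-invertibility argument tied to the secrecy guarantee~\eqref{level_is}, but the underlying content is the same.
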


\begin{proof}
The individually secure linear code defines $2^{u(\ku-c)}$ cosets. One of them is the code itself. Each coset contains $2^{uc}$ codewords. The vector $M^{i}_{c}$ is used as the index to choose the coset, and $M^{i}_{l-c}$ is used to select the codeword within the coset. Since,
given $M^{i}_{c}$, we known the specific coset selected, from $X^{i}_{l-c}$ we can know the codeword selected in the coset. Note that $X^{i}_{l-c}$ contain exactly $uc$ bits, which is the same number of bits that represent the possible codewords in each coset. Such that using the index of the codeword we obtain $M^{i}_{l-c}$. Now that we have the entire column $M^{i}$, we can use the encoding matrix $\mathbf{G}_{\mathrm{IS}}$ to obtain $X^{i}_{c}$. We can represent this as solving the linear system,
\begin{align*}
    \scriptsize
    \left[ \begin{array}{ccccc}
        M^{i}_1 G_{IS}^{\star}(1,1)& + \cdots + & M^{i}_c G_{IS}^{\star}(c,1) + \underline{M^{i}_{c+1}} G_{IS}^{\star\star}(1,1) & + \cdots + &  \underline{M^{i}_{l}} G_{IS}^{\star\star}(l-c,1) \\
        \vdots &  \cdots  & \vdots &  \cdots  & \vdots\\
        M^{i}_1 G_{IS}^{\star}(1,c)& + \cdots + & M^{i}_c G_{IS}^{\star}(c,c) + \underline{M^{i}_{c+1}} G_{IS}^{\star\star}(1,c) & + \cdots + &  \underline{M^{i}_{l}} G_{IS}^{\star\star}(l-c,c) \\
        M^{i}_1 G_{IS}^{\star}(1,c+1)& + \cdots + & M^{i}_c G_{IS}^{\star}(c,c+1) + \underline{M^{i}_{c+1}} G_{IS}^{\star\star}(1,c+1) & + \cdots + &  \underline{M^{i}_{l}} G_{IS}^{\star\star}(l-c,c+1) \\
        \vdots &  \cdots  & \vdots &  \cdots  & \vdots\\
        M^{i}_1 G_{IS}^{\star}(1,l)& + \cdots + & M^{i}_c G_{IS}^{\star}(c,l) + \underline{M^{i}_{c+1}} G_{IS}^{\star\star}(1,l) & + \cdots + &  \underline{M^{i}_{l}} G_{IS}^{\star\star}(c-l,l) \\
    \end{array}\right]
    =\left[ \begin{array}{c}
        \underline{X^i_1}\\
        \underline{\vdots}\\
        \underline{X^i_c}\\
        X^{i}_{c+1}\\
        \vdots\\
        X^{i}_{l}
    \end{array}\right]
\end{align*}
where the variables marked with underline are the unknowns. By using Gaussian elimination, we can solve the linear system in $\mathcal{O}(l^3)$ operations. Hence, the number of operations needed to obtain $X^{i}_{c}$ given $M^{i}_{c}$ and $X^{i}_{l-c}$ is $\mathcal{O}(l^3)$.
\end{proof}

In addition to being computationally secure, if Eve is a weak eavesdropper, observing no more than $l - w$ of the paths, HUNCC is information-theoretically secure.

\begin{theorem}
Algorithm \ref{alg:hybrid_scheme} is $(l,w)$-individually secure.
\end{theorem}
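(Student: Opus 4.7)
The plan is to reduce information-theoretic individual secrecy of HUNCC, under a weak eavesdropper that observes any $w$ of the $l$ paths, to the $(l,w)$-individual secrecy of the underlying linear code $\mathbf{G}_{\mathrm{IS}}$ already established in Section~\ref{sec:IS}. The main observation is that the public-key encryption layer, while critical for thwarting a \emph{strong} eavesdropper, is information-theoretically transparent once the public key $p$ is known: for any ciphertext $\mathbf{y}_i = \mathrm{Enc}(\ddot{\mathbf{b}}_i, p)$, the map $\mathrm{Enc}(\cdot,p)$ is deterministic and injective (so that $\mathrm{Dec}$ is well-defined in Definition~\ref{Crypto_scheme}), hence $H(\ddot{\mathbf{b}}_i \mid \mathbf{y}_i, p) = 0$. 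Therefore the encrypted paths carry, in an entropy sense, exactly the same content $\ddot{\mathbf{b}}_i$, i.e.\ the coded symbols $X_i^{(1)}, \dots, X_i^{(\lceil \kb / u\rceil)}$.

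First, I would fix an arbitrary index set $\omega \subseteq [l]$ with $|\omega| = w$ and a target message coordinate $s \in [l]$. Writing $Y_\omega$ for the collection of signals observed on paths in $\omega$, I would argue via the injectivity of $\mathrm{Enc}(\cdot, p)$ and the identity map on unencrypted paths that there exists a deterministic function $\phi$ with $(X_j^{(t)})_{j \in \omega, \, t} = \phi(Y_\omega, p)$. Data processing then gives $H(M_s \mid Y_\omega) \geq H(M_s \mid (X_j^{(t)})_{j \in \omega, t})$, and since $Y_\omega$ is itself a deterministic function of the $X_j^{(t)}$ for $j \in \omega$, equality in fact holds.

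Second, I would exploit the block structure of Algorithm~\ref{alg:hybrid_scheme}: the blocks $M^{(1)}, \ldots, M^{(\lceil \kb / u\rceil)}$ are generated independently and uniformly, so for distinct $t \neq t'$ the pair $(M^{(t)}, X^{(t)})$ is independent of $(M^{(t')}, X^{(t')})$. Consequently, $M_s^{(t)}$ is independent of $(X_j^{(t')})_{j\in\omega}$ for all $t' \neq t$, and
\begin{equation*}
H(M_s \mid Y_\omega) \;=\; \sum_{t=1}^{\lceil \kb/u\rceil} H\!\left(M_s^{(t)} \,\big|\, (X_j^{(t)})_{j \in \omega}\right).
\end{equation*}
Within a single block, $X^{(t)} = M^{(t)} \mathbf{G}_{\mathrm{IS}}$ is precisely the $(l,w)$-individually secure codeword of Section~\ref{sec:IS}, which gives $H(M_s^{(t)} \mid (X_j^{(t)})_{j \in \omega}) = H(M_s^{(t)})$ for every $t$. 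Summing and using independence across blocks on the right-hand side yields $H(M_s \mid Y_\omega) = H(M_s)$, as required by Definition~\ref{individually}.

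The only mildly delicate step is the first one, the transparency of the encryption layer to a computationally unbounded observer. I would state it as a short lemma: if $f$ is a deterministic injection and $p$ is publicly known, then $H(X \mid f(X), p) = 0$, and so conditioning on $f(X)$ is information-theoretically equivalent to conditioning on $X$. Once this is in place, the rest is a routine invocation of the individual-secrecy property of $\mathbf{G}_{\mathrm{IS}}$ together with the block-wise independence of the messages, and the bound $w \leq l - c$ built into the construction guarantees the code parameters are in the regime where that property applies.
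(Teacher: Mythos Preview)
Your proposal is correct and takes the same approach as the paper, which dispatches the theorem in a single sentence by citing the individual security of the code $\mathbf{G}_{\mathrm{IS}}$ from \cite{cohen2018secure}; you simply supply the details the paper leaves implicit (transparency of the encryption layer to an unbounded adversary, block-wise independence, and the per-block appeal to the code's secrecy). One small slip: from $X_\omega = \phi(Y_\omega,p)$ data processing gives $H(M_s \mid Y_\omega) \le H(M_s \mid X_\omega)$, not $\ge$; the inequality you actually need comes from the \emph{other} direction of the bijection, namely that $Y_\omega$ is a deterministic function of $X_\omega$ when $\mathrm{Enc}(\cdot,p)$ is deterministic, and since you invoke both directions and conclude equality the argument is unaffected.
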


\begin{proof}
The proof follows directly from the individual security of the code $\textbf{G}_{IS}$ as given in \cite[Section VI]{cohen2018secure}.
\end{proof}

We now show the information rate of HUNCC.

\begin{theorem}\label{theo:rate}
Let $1 \leq c \leq l$ be the number of encrypted paths using the public-key cryptosystem $(\mathrm{Enc}, \mathrm{Dec},p_i,s_i,\kb,\nb)$. Then, the information rate is:
\begin{equation}\label{eq:rate_n}
R = \frac{c \cdot\frac{\kb}{\nb} + (l-c)}{l}
\end{equation}
In particular, $\mathcal{R}\rightarrow 1$ with convergence rate $\mathcal{O}(\frac{1}{l})$.
\end{theorem}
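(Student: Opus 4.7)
The plan is to evaluate the information rate by a direct per-path accounting of information bits versus channel bits, and then to inspect the limit as $l$ grows.

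First I would observe that by construction of Algorithm~\ref{alg:hybrid_scheme}, the pre-processing stage $X^{(i)} = M^{(i)}\mathbf{G}_{\mathrm{IS}}$ uses an invertible matrix over $\mathbb{F}_{2^u}$ and hence preserves the total number of information symbols: each of the $l$ paths carries, after aggregation of $\lceil k_b/u\rceil$ symbols of $\mathbb{F}_{2^u}$, a bit string $\ddot{\mathbf{b}}_i$ of length $k_b$ that corresponds to exactly $k_b$ information bits worth on that path. What the paths differ in is the number of bits actually transmitted on the channel.

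Next I would count those transmission costs. On each of the $c$ encrypted paths Alice sends $\mathbf{y}_i = \mathrm{Enc}(\ddot{\mathbf{b}}_i,p_i)$, which has length $n_b$ bits, giving a per-path rate of $k_b/n_b$. On each of the remaining $l-c$ paths $\ddot{\mathbf{b}}_i$ is transmitted as-is, giving a per-path rate of $1$. Averaging across the $l$ parallel links yields
\[
R \;=\; \frac{1}{l}\!\left(\sum_{i=1}^{c}\frac{k_b}{n_b} \;+\; \sum_{i=c+1}^{l} 1\right) \;=\; \frac{c\cdot\frac{k_b}{n_b} + (l-c)}{l},
\]
which is precisely \eqref{eq:rate_n}.

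Finally, for the convergence statement, I would rewrite this as
\[
R \;=\; 1 \;-\; \frac{c}{l}\!\left(1 - \frac{k_b}{n_b}\right).
\]
Since $k_b/n_b < 1$ and $c$ is a fixed parameter of the scheme — by Theorem~\ref{theo:level_security} even $c=1$ already secures every message individually — the correction term decays as $\mathcal{O}(1/l)$, so $R \to 1$ with convergence rate $\mathcal{O}(1/l)$ as $l \to \infty$. There is no real obstacle in this proof; the only subtlety worth highlighting is that the pre-processing with $\mathbf{G}_{\mathrm{IS}}$ is information-preserving, so no rate loss is incurred by the mixing stage and all of the overhead stems from the $k_b \mapsto n_b$ expansion applied on the $c$ encrypted links.
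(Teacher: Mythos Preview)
Your proof is correct and follows essentially the same approach as the paper: a direct per-path accounting that assigns rate $k_b/n_b$ to each of the $c$ encrypted links and rate $1$ to the remaining $l-c$ links, then averages. Your write-up simply makes explicit two points the paper leaves implicit, namely that the invertibility of $\mathbf{G}_{\mathrm{IS}}$ means the mixing stage incurs no rate loss, and the rewriting $R = 1 - \tfrac{c}{l}(1 - k_b/n_b)$ that makes the $\mathcal{O}(1/l)$ convergence visible.
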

\begin{proof}
The total rate in the multipath network with $l$ paths is obtained directly summing the rate $\kb/\nb$ of the $c$ encrypted paths with the rate of 1 in all the remaining $l-c$ paths.
\end{proof}

\noindent A few important remarks on HUNCC are in order:

First, it is essential to note that the public-key used in the $c$ encrypted paths of HUNCC is just the traditional public-key from the underlying cryptosystem. For instance, the one provided in the McEliece cryptosystem. Hence, this public-key is independent of the $\ku$-messages transmitted over the multipath network and can be supplied to Alice in advance over a public channel. Moreover, unlike in information-theoretic security where a unique secret-key is utilized per message transmitted, as is the case for a one-time pad \cite{shannon1949communication}, the public-key in HUNCC can be used for multiple messages. In the same manner, the generation matrix $\textbf{G}_{IS}$ of the individually secure code is not confidential. Alice and Bob can agree on it over a public channel, i.e. this matrix may be revealed to Eve. Also, this matrix may be used indefinitely for any future transmissions.

Another point is related to the security level of HUNCC. As stated in Theorem \ref{theo:level_security}, if the underlying public-key cryptosystem has security level $b$, HUNCC is individually computationally secure with security level at least $b-\frac{\delta}{2^b}$ where $\delta = \mathcal{O}(l^3)$. We note that this value is extremely close to $b$. Indeed, if $b=128$ and we used Gaussian elimination to solve the system pertaining to Lemma \ref{lem:linear_codes_MDS}, HUNCC would need $l > 10^{38}$ to reduce the security level by $1$.

Finally, although our main focus is on post-quantum cryptosystems, HUNCC can be used with any public-key cryptosystem, including ones which are combined with a symmetric-key cryptosystem \cite{agrawal2012comparative,delfs2007symmetric,PQCRYPTO2015}.

In Section \ref{sec:efficiency}, we present a more detailed analysis of the performance of HUNCC. And in Section \ref{sec:Applications} we provide other features and applications of HUNCC.

%%%%%%%%%%%%%%%%%%%%%%%%%%%%%%%%%%%%%%%%%%%%
\section{Performance Analysis}\label{sec:efficiency}
%%%%%%%%%%%%%%%%%%%%%%%%%%%%%%%%%%%%%%%%%%%%

In this section, we analyze the performance of HUNCC. We focus on three measures, the information rate, the individual computational secrecy, and the information-theoretic individual secrecy. In particular, we compare the trade-off between these quantities as we vary the number of encrypted links, denoted by $c$ in Algorithm \ref{alg:hybrid_scheme}.

We start by considering the communication rate, see Theorem~\ref{theo:rate}. We note that the rate increases as we reduce the number of encrypted links $c$.
On the other hand, the computational security level remains virtually constant as long as at least one path is encrypted, i.e. $c \geq 1$.

Finally, in the case of a weak Eve, the individual security level increases linearly with the number of links, $l-w$, from which the weak Eve does not get information. In other words, while the computational bit-level security remains constant, the uncertainty of an adversary which observes a subset of the links increases.\off{\raf{I think we need to discuss our claims here about the information-theoretic individual secrecy. I do not think it increases with the number of encrypted paths. I think we need to be careful here.}\ale{Agree! we can say this work only for weak Eve that see only $w$ paths, and change $c$ with $w$.} \ale{I fixed the above part and I believe that we need to remove the part below (Let me know what do you think):}
\red{In addition, under the stronger notion of individual secrecy, see \eqref{level_is}, the uncertainty of the adversary can be extended to entire subsets of messages.
It shall be noted than an adversary which decides to omit entirely any of the $c$ encrypted links, and consider it to be an erasure, it will fall under the security guarantee of the $(l,c)$-individual security.}}

\begin{remark}
In Algorithm~\ref{alg:hybrid_scheme}, we assume that each path utilizes the same public-key cryptosystem, with identical public and private keys. If we instead opt to use different cryptosystems, each with independent public and private keys, the total computational security does increase, though marginally. Using $c$ public-key cryptosystems increases the security level from $b$ to $b + \log(c)$. It should be emphasized that this increase on the security level comes at the cost of additional private-keys needing to be shared prior to the communication phase.
\end{remark}

\begin{figure}[t]
    \centering
    \includegraphics[scale=0.43]{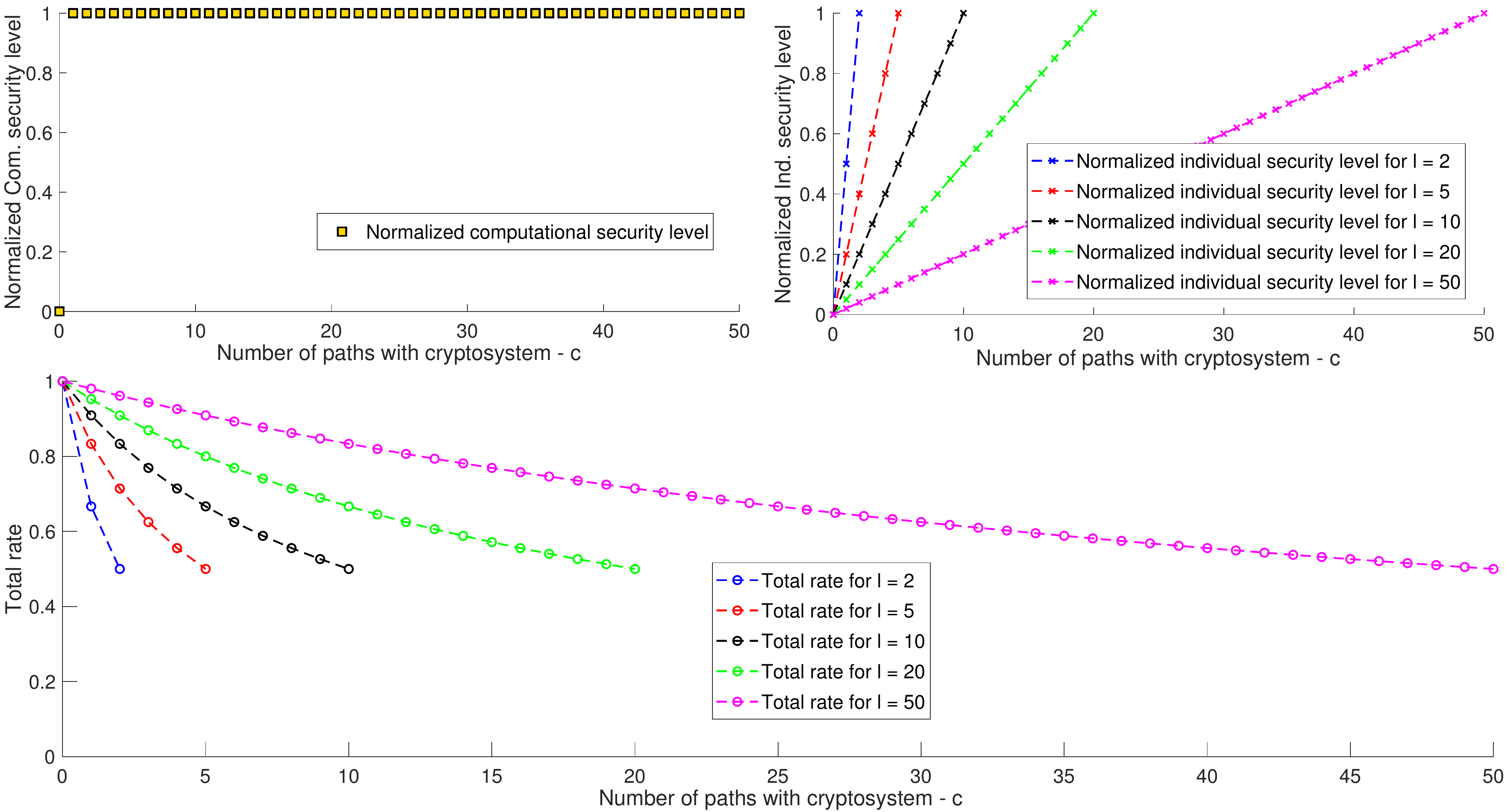}
    \caption{Efficiency and trade-off between security level and communication rate as a function of the number of encrypted links (using a classical McEliece code of rate $\approx 0.5$).}
    \label{fig:trade_off}
    \vspace{-0.2in}
\end{figure}

In Figure \ref{fig:trade_off}, we illustrate each performance parameter versus the number of encrypted links $c$, for HUNCC with the original McEliece cryptosystem using a $[1024,512]$-Goppa code. The information rate over each encrypted path is $\kb/\nb \approx 0.5$. For illustration purposes, we consider normalized measures of security. Namely: 1) the normalized computational security level is the security level divided by the maximum computational security that can be obtained by using the cryptosystem over all links, i.e.
$f^{s}_{crypto} = \frac{\min\{c,1\} \cdot b_{code}}{b_{max}}$.\off{\raf{I think here we need to be careful to include the loss from the theorem.} \ale{Maybe one way is to refer to the remarks before this section when you explained that this loss is negligible} \ale{See the note below.}} 2) the normalized individual security level is the fraction between the number of the links the weak Eve can see divided by the total links in the network, i.e. $f^{s}_{IS} = \frac{l-w}{l}$.
\off{\ale{Here we need to change $c$ with $l-w$} \ale{Done!}}

In the example given in Figure \ref{fig:trade_off}, the results are for the case in which the same McEliece cryptosystem code with the same public-key is used in all the $c$ links.
Hence, the normalized computational security level is zero for $c=0$ and one for any $c=l$.
Note that, via Theorem~\ref{theo:level_security}, the computational security level for any $1 \leq c < l$, is bounded by $b - \delta/2^{\delta}$, where $\delta = O(l^3)$. This difference is negligible and does not appear on the plot, which essentially remains constant and equal to one for $c \geq 1$. \off{\raf{This paragraph seems to be repeating the last one a bit. I think we should merge them.}\ale{I think that this is exactly the explanation that was missing to you in your previous note.}}

\begin{figure}[t]
    \centering
    \includegraphics[scale=0.43]{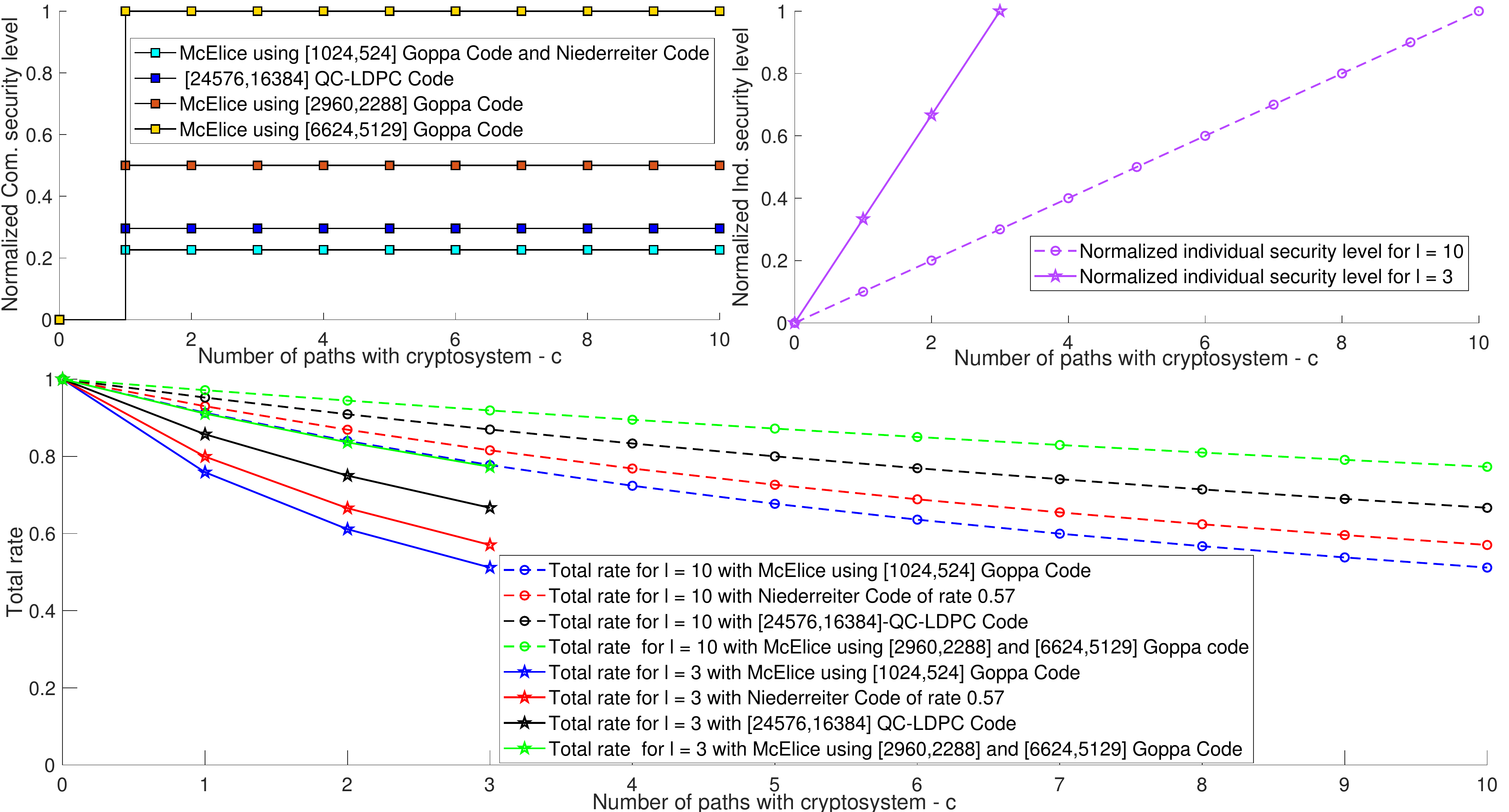}
    \caption{Trade-off between security level and communication rate as a function of the number of encrypted links using different computational security codes.}
    \label{fig:trade_off_codes}
    \vspace{-0.2in}
\end{figure}

The solution proposed in this paper is not restricted to the original McEliece cryptosystem. Any cryptosystem can be utilized. To illustrate this, in Figure \ref{fig:trade_off_codes} we present the efficiency of the proposed solution and the resulting trade-offs for other codes suggested for the McEliece cryptosystem.
In particular, the result presented in Figure \ref{fig:trade_off_codes} are for $l=3$ and $l=10$, and the following codes: 1) McEliece cryptosystem with the original parameters, namely $[1024,524]$-Goppa code which achieves $b=58$-bit computational security. 2)  McEliece cryptosystem with $[2960,2288]$-Goppa code which achieves $b=128$-bit computational security, and $[6624,5129]$-Goppa code for $b=256$-bit computational security as suggested in \cite{bernstein2008attacking}. Both have an information rate of $\kb/\nb \approx 0.777$. Similar parameters for long-term security  in post-quantum cryptography are proposed in \cite{PQCRYPTO2015}, namely McEliece with a $[6960,5413]$-Goppa code. 3) McEliece cryptosystem adopting QC-LDPC codes as presented in \cite{baldi2007cryptanalysis,baldi2009ldpc}. For this family of codes, the following parameters are suggested, $\kb=16384$ and $\nb=24576$, i.e. a code with rate $\kb/\nb = 0.6667$. This code achieves $b=75.8$-bit computational security. 4) Niederreiter in \cite{niederreiter1986knapsack} proposed a Reed-Solomon code with a rate of $0.57$, however the generalized Reed-Solomon codes were broken \cite{sidelnikov1992insecurity}. In \cite{li1994equivalence}, a Niederreiter-type system, which utilized the same Goppa codes used by original McEliece construction, and with the same security level, was proposed. That encryption scheme was also considered in \cite{bernstein2008attacking} and tested under the state-of-the-art attacks of the McEliece cryptosystem, and it thus assumed to achieve $b=58$-bit computational security.

In this comparison, presented in Figure \ref{fig:trade_off_codes}, the maximum computational security level obtained is of $256$-bit using the McEliece cryptosystem with $[6624,5129]$-Goppa codes. Hence, the results presented on the left in Figure \ref{fig:trade_off_codes} for each possible code are normalized with $b_{max}=256$. Recall that while the computational security level is one of the main parameters considered in choosing a cryptosystem, another one is the size of the public-key.
We note also that the computational security level of HUNCC remains essentially constant despite increasing the number of links that use the cryptosystem or by changing the total number of paths in the network.
%Thus, the considerations which are relevant when choosing a crypto-system scheme, i.e. size of the keys,
%Hence, to conclude, the considerations of choosing the cryptosystem, in general, are not changing using the solution proposed.

%%%%%%%%%%%%%%%%%%%%%%%%%%%%%%%%%%%%%%%%%%%%
\section{Other Features and Applications}\label{sec:Applications}
%%%%%%%%%%%%%%%%%%%%%%%%%%%%%%%%%%%%%%%%%%%%
In this section, we present a few features and applications that exemplify the applicability and the performance of the proposed HUNCC scheme for extensive essential applications.

%%%%%%%%%%%%%%%%%%%%%%%%%%%%%%%%%%%%%%%%%%%%
\subsection{Single Path Communication}

While the main network scheme considered in this work is with multi-paths, the secure coding scheme we propose in this work is universal in the sense that it can be applied for any communication network \cite{silva2011universal}. For instance, it can be used in classical point-to-point single-path communication, as well as heterogeneous mesh networks. Figure \ref{fig:Genrral_Scheme_sp} depicts how the proposed secure scheme can be applied in classical point-to-point single-path communication. In this setting, similar to the model presented in Section \ref{sec:model}, there is one source Alice, one legitimate destination Bob, and an eavesdropper Eve. The main difference is that, in the model presented in this section, there is only one path to transmit information between Alice and Bob.
Alice wants to transmit securely over the single-path communication a message $M$ of $\ku$ symbols over a finite field $\mathbb{F}_{q^u}$.
For this, we propose that Alice and Bob utilize HUNCC as described in Section \ref{sec:hybrid}, by essentially simulating parallel virtual links.
More precisely, at the first stage Alice will encode the $\ku$ symbols using the linear individual secure code, $\mathbf{G}_{IS} \in \mathbb{F}_{q}^{l \times l}$.
Then, Alice encrypts $c$ of the symbols before the transmission over the channel using the public-key provided by Bob over the public directory.
The remaining process is as given in Section \ref{sec:hybrid}.
Hence, we obtain the same communication rate and security level as in the multipath network case.
\begin{figure}
    \centering
    \includegraphics[scale=0.5]{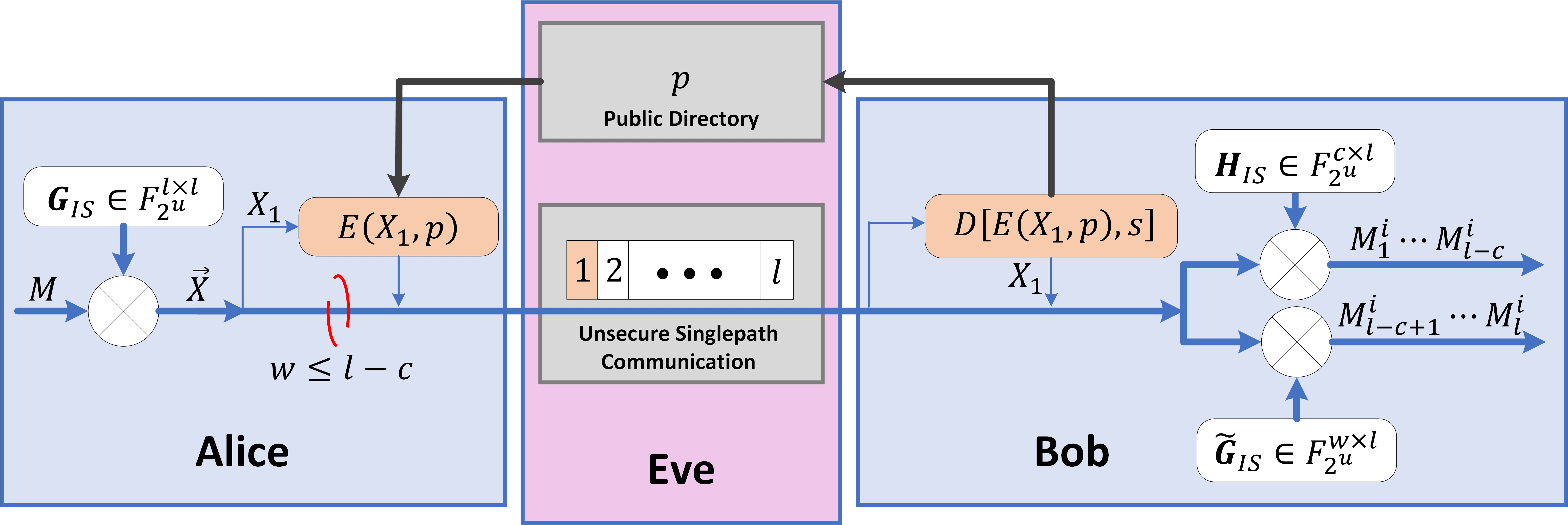}
    \caption{Hybrid post-quantum secure singlepath communication scheme.}
    \label{fig:Genrral_Scheme_sp}
    \vspace{-0.2in}
\end{figure}

%%%%%%%%%%%%%%%%%%%%%%%%%%%%%%%%%%%%%%%%%%%%
\subsection{Myopic Adversaries} \label{sec: aplications myopic}

An additional important scenario considered in the literature is one in which Eve is allowed not only to eavesdrop the information transmitted over the network, but also to corrupt the encrypted packets.
This scenario is considered in the literature under different models of adversaries, for instance, with passive attacks \cite{silva2011universal,zhang2010p,liu2018security}, myopic adversaries  \cite{sarwate2010coding,dey2015sufficiently,dey2019sufficiently,dey2019interplay}, man in the middle attack \cite{meyer2004man,wagoner2011detecting}, byzantine attacks \cite{ho2008byzantine,he2009secure}, etc.
\begin{figure}[t]
    \centering
    \includegraphics[scale=0.47]{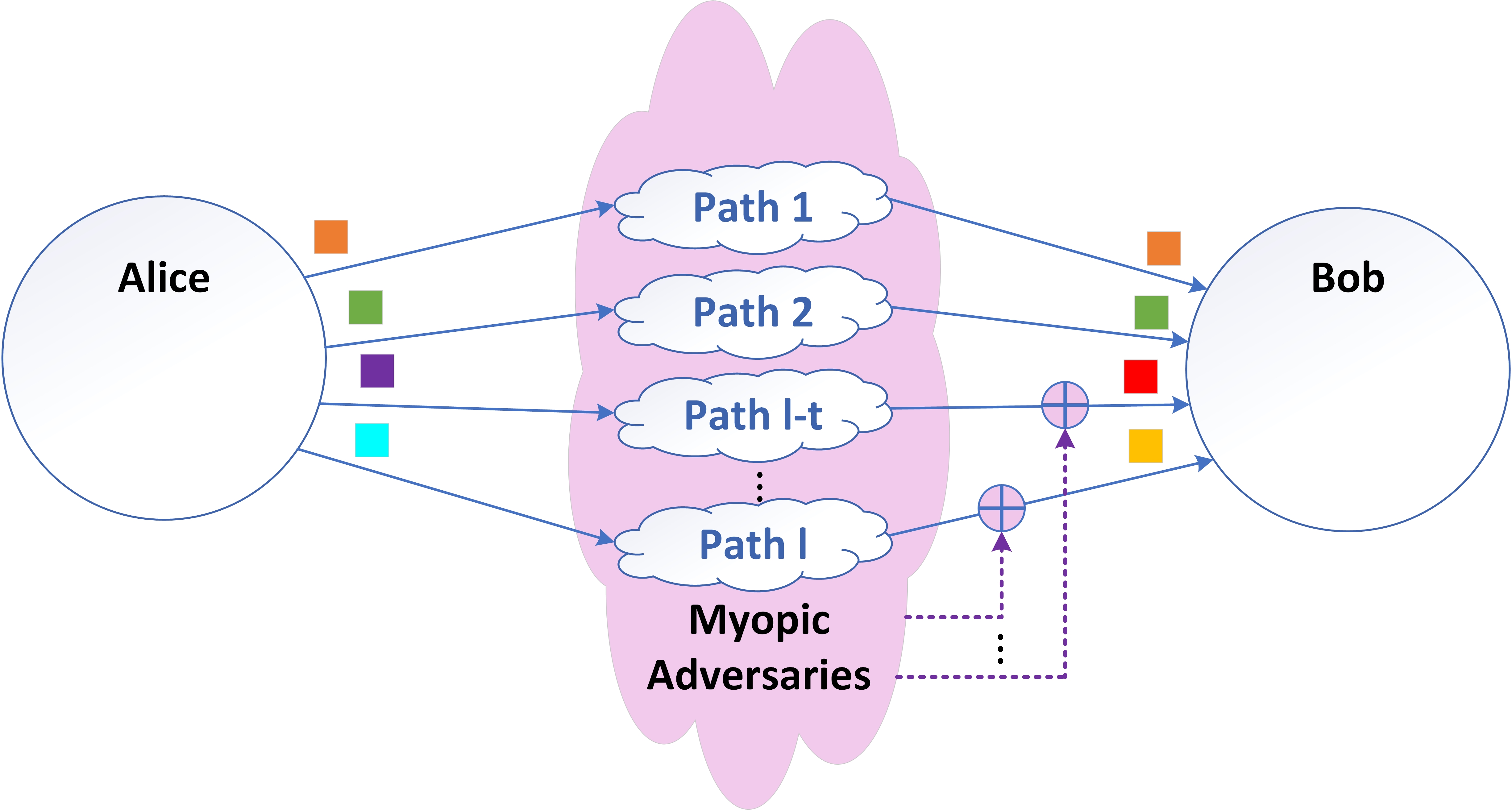}
    \caption{Myopic adversaries model. The eavesdropper is available not only to eavesdrop the information transmitted over the network but also to inject errors into $2t$ packets sent from Alice to Bob that corrupt those encrypted packets.}
    \label{fig:MyopicAdversaries}
    \vspace{-0.2in}
\end{figure}

In the case of a weak eavesdropper, which can obtain only information from $w < l$ subset of the paths in the network; we can augment our linear individual security code to perform error correction, which allows to recover up to $t$ errors which may be injected by Eve.
A possible solution is to generalize the code in the same manner as presented in \cite[Sections IV and VII]{silva2011universal}.
Such extension comes at a cost.
To correct $t$ injected errors, the rate must be decreased by $2t$.
Note that in this setting, the code can support the case in which Eve can corrupt any subset of messages transmitted over the network, whether they are on the paths which are encrypted, or not.
Indeed, the correction property only relies on the decoding of the linear coding scheme, and is independent of the deciphering phase.

In the case of a strong eavesdropper, which can obtain the information from all the paths in the network, we can utilize the same generalized code to correct the $t$ errors that may be injected by Eve. However, in this case, to ensure security, Alice will need to encrypt at least $2t+1$ messages that are transmitted over the different paths in the network.
Hence, we must encrypt the additional $2t$ messages, transmitted to correct the errors, to prevent Eve from obtaining sufficient encoded messages by the linear code, which may provide her with the needed rank to decode the total message.

Future work can consider the case where authentication can be utilized between the encoded messages to reduce the overhead required to correct the injected errors in the above-proposed solution. Note that if Bob is able to identify the corrupted messages, Alice needs to include only one additional symbol per injected error, as opposed to the two messages in the model presented above. Furthermore, the linear code can be generalized as shown in \cite{silva2011universal} to support the scenario where the paths in the network are not error-free \cite{roth1991maximum,silva2011universal}. In the case where the cryptosystem is based on error-correction codes as in the McEliece cryptosystem, instead of adding an error vector at the source, Alice can use the errors of the channel to confuse Eve. The codes in those cases are designed to be able to decode at the legitimate decoder, given those errors. Hence Bob will be able to decode the information.
All of those extensions allow us to increase the effective rate of those solutions.

%%%%%%%%%%%%%%%%%%%%%%%%%%%%%%%%%%%%%%%%%%%%
\subsection{Distributed Storage and Cloud Applications}

The goal of a distributed storage system is to provide reliable access to data which is spread over unreliable storage nodes \cite{dimakis2010network, dimakis2011survey,rashmi2011optimal}. Applications involving data centers are ubiquitous today, Google's GFS \cite{ghemawat2003google}, Amazon's Dynamo \cite{decandia2007dynamo}, Google's BigTable \cite{chang2008bigtable}, Facebook's Apache Hadoop \cite{borthakur2011apache}, Microsoft's WAS \cite{calder2011windows}, LinkedIn's Voldemort \cite{auradkar2012data} and SkyFlok \cite{skyflok}, are just a few examples.

One of the main drawbacks of distributed storage is that, storing the data at more locations can potentially increase the risk in the security and privacy of the data \cite{kadhe2014weaklyNetCod,kadhe2014weakly,paunkoska2016improved}. One way to address this is by reinterpreting the problem as a multipath network. This is done by considering Alice and Bob to be the same individual at different times, and the communication links to be the storage nodes. In this way, the different privacy solutions to the multipath network, including HUNCC, can be readily applied to secure the data in a distributed storage system. Dealing with erasures and errors, of both a probabilistic or adversarial nature, can be done analogous to the way presented in Section \ref{sec: aplications myopic}.

\begin{figure}[t]
    \centering
    \includegraphics[scale=0.5]{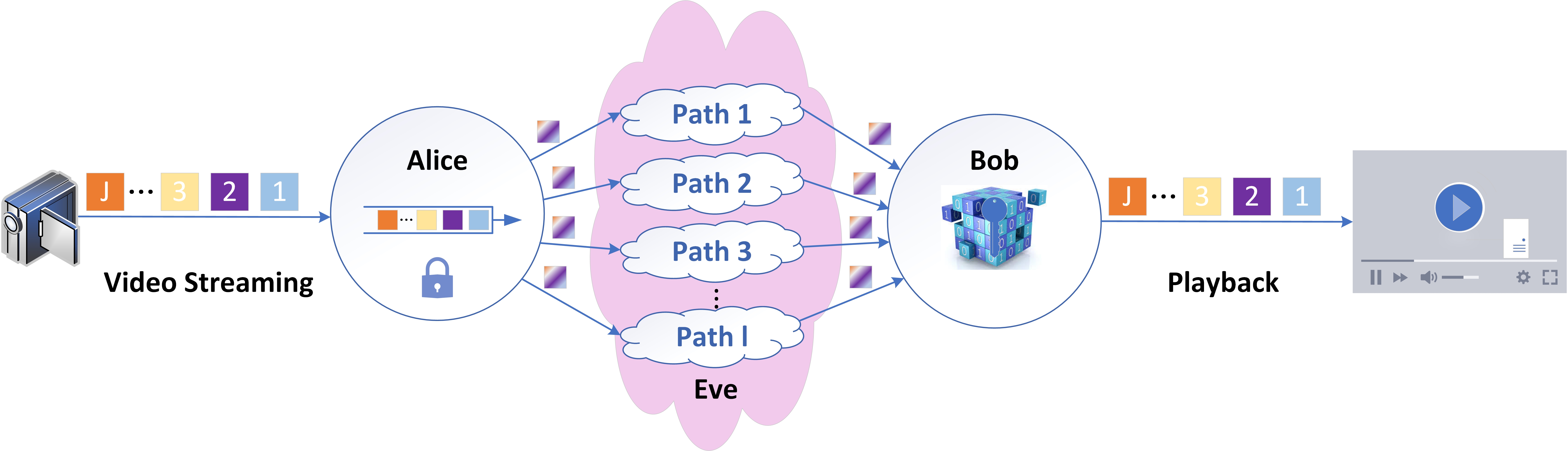}
    \caption{Ultra-reliable low-latency communications for video streaming.}
    \label{fig:Ultra_Reliable}
    \vspace{-0.2in}
\end{figure}

%%%%%%%%%%%%%%%%%%%%%%%%%%%%%%%%%%%%%%%%%%%%
\subsection{Ultra-Reliable Low-Latency Communications}
Recently the application of network coding in streaming communication which demand low delays has been considered \cite{zeng2012joint,KarLei2014,9076631,cohen2019adaptive,MSME19,arno2019discrete}, with applications in audio/video \cite{liveu2017site,liveu2017White}, smart-city \cite{saadat2018multipath,noauthor_snob-5g_nodate}, IoT networks and control applications \cite{andrews2014will,GSMA2018White}, distributed computation \cite{malak2019distribute}, etc.
Figure \ref{fig:Ultra_Reliable} depicts a multi-path low-latency communication for video streaming.
Traditional coding solutions which achieve high throughput are generally not suitable to guarantee low in-order delivery delay which is a requirement in these applications.
This has led to a series of work that propose methods to explore the trade-off between high-rate, and low-delay, see e.g. \cite{cloud2015coded,9076631,cohen2019adaptive,MSME19,arno2019discrete}.

%Using forward error correction, with the information obtained from networks with feedback, the in-order delivery delay can be reduced \cite{cloud2015coded}. To guarantee the desired trade-off between high-rate and low-delay, in \cite{9076631,cohen2019adaptive,MSME19,arno2019discrete} proposed joint optimization network coding solutions that manage the number of raw symbols Alice involved in the linear network encoding process.

When the communication needs to also be secure, we propose to use HUNCC scheme of Section~\ref{sec:hybrid}, in conjunction with the coding schemes of  say \cite{9076631,cohen2019adaptive,MSME19,arno2019discrete}.
While in those works the number of messages from Alice, that are involved in the linear network encoding process, depend on the desired rate/delay trade-off, in the security application,  this number is further constrained by the security guarantees that are desired.
This might come at the cost of delay, as more messages may need to be mixed in together to provide secrecy.
%This is due to the minimum number of symbols the secure code request to be involved together in the encoding process to confuse the eavesdropper.

Future works of interest include the study of new secure codes for streaming communications which allow to optimally trade-off delay, security guarantees, and information rate.

%%%%%%%%%%%%%%%%%%%%%%%%%%%%%%%%%%%%%%%%%%%%
\subsection{RSA Cryptosystem}\label{rsa_example}

Although in this work, we focus on post-quantum cryptography, and specifically on the McEliece coding-scheme in the $c$ encrypted paths, any computationally secure cryptosystem can be used for our solution. Here we briefly discuss an example of how RSA \cite{rivest1978method} can be applied in our network-coding solution, in the context of the example given in Figure \ref{fig:foobar}-(e), i.e. a multipath network with two paths.
As a first stage of encoding at Alice, we assume that the generation matrix,  $\textbf{G} = \left( \begin{smallmatrix} 1 & 1\\  2 & 1 \end{smallmatrix} \right )$, of the individual security code is used. Such that, $X = M\textbf{G} = [M_1 + M_2 , M_1 + 2 M_2]$.
Now, say Alice and Bob agree on using an RSA scheme only over the first path. For $128$-bit security, they settle on using a $3072$ bit key. Using a $328$ bit OAEP padding, the message size can be at most $2744$ bits. Thus, Alice can map $X_1$ into a $2744$ bit vector and encode it using RSA into $E(X_1,p) \in \mathbb{F}_{2}^{3072}$. Alice will then send $\log_2 |E(X_1,p)| = 3072$ bits through channel 1 and $\log_2 |X_2| \leq 2288$ bits through channel 2. Thus, the total communication cost will be around $5360$ bits giving a communication rate slightly greater than $0.85$.

\section{Conclusions}\label{sec: conclusions}
In this work, we suggests a novel secure post-quantum cryptographic scheme that achieves high communication rates without sacrificing the computational security level.
Surprisingly, this solution based on premixing the information can provide the same security level suggested for post-quantum security,  even if only a single link is encrypted.
This is particularly appealing when a classical public-key cryptosystem (e.g., McEliece coding scheme) can only be used in part of the data transmitted or stored.
In addition to this flexibility, we showed that the information rate converges to $1$ as $\mathcal{O}(1/l)$, thus greatly improving upon the current post-quantum cryptosystems.
This scheme demonstrates the potential for hybrid solutions, which combines information-theory security with public-key cryptography, as a way to improve the performance of the security schemes in terms of the information rate and security trade-off.
%Specifically, we focused on multipath networks, in which, using the network-coding scheme proposed, and then any public-key encryption at least in one of the $l$ paths,  we obtain the same original computational security level over all the information transmitted in the remaining paths.
%This security level is achieved with a rate
%Moreover, we showed that the solution offered is hybrid, such that, according to the setting of the model, by combining individual information-theory security with public-key cryptography, one can improve the performance of the security scheme in terms of rate and security tread-off.

%%%%%%%%%%%%%%%%%%%%%%%%%%%%%%%%%%%%%%%%%%%%
\bibliographystyle{IEEEtran}
\bibliography{references, Ref1, Ref2}
%%%%%%%%%%%%%%%%%%%%%%%%%%%%%%%%%%%%%%%%%%%%
\end{document}